\documentclass[11pt,reqno]{amsart}
\usepackage[utf8]{inputenc}
\usepackage[margin=1.18in]{geometry}
\usepackage{amsmath,amssymb,amsfonts,mathtools,mathrsfs}
\usepackage{enumitem}
\usepackage{microtype}
\usepackage[dvipsnames]{xcolor}
\usepackage[colorlinks=true,citecolor=MidnightBlue,linkcolor=BrickRed,
  urlcolor=RoyalBlue]{hyperref}
\usepackage[backend=bibtex,style=alphabetic,maxnames=99]{biblatex}
\AtEveryBibitem{%
  \iffieldundef{doi}{}{\clearfield{url}}%
  \iffieldequalstr{eprinttype}{arxiv}{\clearfield{url}}{}%
}

\numberwithin{equation}{section}
\newtheorem{theorem}{Theorem}[section]
\newtheorem{proposition}[theorem]{Proposition}
\newtheorem{lemma}[theorem]{Lemma}
\newtheorem{corollary}[theorem]{Corollary}
\theoremstyle{definition}
\newtheorem{definition}{Definition}[section]
\theoremstyle{remark}
\newtheorem{remark}[theorem]{Remark}
\newcommand{\EE}{\mathbb E}
\newcommand{\PP}{\mathbb P}
\newcommand{\FF}{\mathcal F}
\newcommand{\RP}{\mathbb {RP}^{1}}
\newcommand{\1}{\mathbf 1}
\newcommand{\norm}[1]{\left\lVert #1\right\rVert}

\title[Poisson statistics in one dimension]
{Poisson statistics for one-dimensional Anderson--Bernoulli
Schr\"odinger operators}
\author{Yi Han}
\thanks{Department of Mathematics, Massachusetts Institute of Technology,
Cambridge, MA.\newline Email: \texttt{hanyi16@mit.edu}.}
\thanks{A Lean~4 formalization is available online;
see Appendix~\ref{sec:formalization} for details.}
\hypersetup{
  pdftitle={Poisson statistics for one-dimensional Anderson--Bernoulli
  Schrodinger operators},
  pdfauthor={Yi Han}
}

\begin{document}

\begin{abstract}
For a one-dimensional discrete Schr\"odinger operator with bounded
nonconstant iid potential, we prove a unit-intensity Poisson limit
for local eigenvalues unfolded by the integrated density of states.
The limit holds at every energy in the interior of the almost-sure
spectrum. In particular, it holds for Bernoulli potentials at every
fixed nonzero coupling. The proof uses auxiliary phase functions
built from inverse stationary measures to obtain an exact IDS drift.
Projective contraction then yields exponential crossing times, and
a spatial subdivision argument gives independent Poisson counts in
disjoint energy windows. At energies where the IDS has a finite
positive derivative, the usual linear rescaling gives the same limit.
\end{abstract}

\maketitle
\enlargethispage{4pt}

\section{Introduction}
\label{sec:main}

We study local eigenvalue statistics for the one-dimensional
Anderson--Bernoulli model at fixed coupling. In a localized regime,
eigenfunctions with well separated localization centers have little
spatial overlap. This suggests that eigenvalues near a fixed energy,
arising from distant regions of a large sample, should behave almost
independently. Poisson statistics on the scale of the mean spacing
are therefore the natural prediction.

To prove this prediction, one must control the possibility that a
single region contributes several eigenvalues to a short energy
interval. For potential laws with a bounded density, Minami's
estimate provides this control; combined with localization, it
yields Poisson statistics \cite{Minami1996}.

The Bernoulli model poses a particular difficulty. Localization is
known at every nonzero coupling \cite{CarmonaKleinMartinelli1987},
but the atomic potential law prevents the classical proof of
Minami's estimate by spectral averaging. Bourgain obtained Poisson
statistics for certain sufficiently small Bernoulli couplings
satisfying arithmetic conditions \cite{Bourgain2013}.
Here we prove the Poisson limit at every fixed nonzero coupling
and every energy in the spectral interior. We normalize the
eigenvalues using the integrated density of states itself, so the
result also applies when this function has no density.

\subsection{Model and main result}
We state the result for a general bounded iid potential, which
includes the Bernoulli case.
Let $(v_j)_{j\in\mathbb Z}$ be iid real random variables with a
compactly supported law $\mu$ that is not a point mass. Write
$H_\omega$ for the operator on $\ell^2(\mathbb Z)$ given by
\[
 (H_\omega u)_j=u_{j-1}+v_{j-1}(\omega)u_j+u_{j+1}.
\]
Its restriction to sites $1,\ldots,n$ with Dirichlet conditions
$u_0=u_{n+1}=0$ is denoted by $H_n$, and its eigenvalues by
$\lambda_{1,n}<\cdots<\lambda_{n,n}$. The integrated density of
states (IDS) is
\begin{equation}\label{eq:ids-definition}
 N(t)=\lim_{L\to\infty}\frac1L\,
       \EE\#\{1\leq j\leq L:\lambda_{j,L}\leq t\}.
\end{equation}
Write $dN$ for the associated density-of-states measure.
Thus $N(t)$ is the limiting mean number of eigenvalues per site
below $t$.
The potential law remains fixed as $n\to\infty$.

Let $\Sigma$ be the almost-sure spectrum of $H_\omega$.
For the iid model, $\Sigma=\operatorname{supp}\mu+[-2,2]$
\cite[Theorem~2]{Stolz2011}. The IDS is continuous and
$\operatorname{supp}(dN)=\Sigma$
\cite[Section~5.2 and Corollary~5.14]{Teschl2000}.
Consequently, $N$ is strictly increasing on every interval
contained in $\operatorname{int}\Sigma$: if $u<v$ belong to such
an interval, then $N(v)-N(u)\geq dN((u,v))>0$.

An energy interval of IDS mass $a$ contains a limiting mean
fraction $a$ of the eigenvalues. The local scale for a sample of
$n$ sites is therefore an IDS mass of order $1/n$.
We use $N$ to unfold the eigenvalues: for
$E\in\operatorname{int}\Sigma$, define the point process
\[
 \xi_{n,E}=\sum_{j=1}^n
          \delta_{\,n[N(\lambda_{j,n})-N(E)]},
\]
where $\delta_z$ denotes unit point mass at $z$.
A window of unfolded length $w$ corresponds to an IDS increment
$w/n$. Let $\Pi$ denote the Poisson point process on $\mathbb R$
with Lebesgue intensity: its counts in disjoint bounded intervals
are independent Poisson variables with means equal to the interval
lengths. All point-process limits below are in distribution for the
vague topology.

Our main result gives this Poisson limit throughout the spectral
interior. We state it first in IDS coordinates, and then in the
usual linear coordinates at energies where $N$ has a finite
positive derivative.

\begin{samepage}
\begin{theorem}[Poisson local statistics]\label{thm:poisson}
Let the potentials be iid with a compactly supported law $\mu$
that is not a point mass. Fix $E\in\operatorname{int}\Sigma$.
\begin{enumerate}
\renewcommand{\labelenumi}{(\arabic{enumi})}
\item\label{part:ids-unfolding} The unfolded process satisfies
\[
 \xi_{n,E}\ \Longrightarrow\ \Pi.
\]
\item\label{part:linear-scaling} If $N$ is differentiable at $E$ and
$0<\rho(E):=N'(E)<\infty$, then
\[
 \sum_{j=1}^n\delta_{\,n\rho(E)(\lambda_{j,n}-E)}
       \ \Longrightarrow\ \Pi.
\]
\end{enumerate}
\end{theorem}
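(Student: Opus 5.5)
We prove part~(1) by the counting-function/Prüfer-phase route, following the strategy announced in the abstract; part~(2) is then a short consequence. The key observation is Sturm oscillation: the number of eigenvalues of $H_n$ in $(E+s/n\ \text{IDS-units}]$ equals, up to an error of at most one, the number of times the Prüfer phase $\theta_j(t)$ of the solution with $u_0=0,u_1=1$ crosses multiples of $\pi$ between the two energies $t_-<t_+$ with $N(t_\pm)=N(E)+s_\pm/n$. We therefore study the \emph{relative phase} $\theta_j(t_+)-\theta_j(t_-)$ as a Markov chain on $\RP$-pairs driven by the iid transfer matrices $A_{v}(t)=\begin{pmatrix} t-v&-1\\1&0\end{pmatrix}$.

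\textbf{Step 1 (exact IDS drift).} We replace the raw Prüfer angle by an auxiliary phase $\phi_j(t)=F_t(\theta_j(t))$, where $F_t$ is the distribution function of the inverse stationary measure $\nu_t$ of the projective action on $\RP$ (lifted to $\mathbb R$). The point is that $\nu_t$ is continuous (Furstenberg, positive Lyapunov exponent by non-degeneracy of $\mu$), and the rotation-number identity gives $\EE[\phi_{j+1}(t)-\phi_j(t)\mid\FF_j]=\pi(1-N(t))$ or the analogous expression in terms of $N(t)$, \emph{exactly} rather than asymptotically. Hence the difference $\Delta_j=\phi_j(t_+)-\phi_j(t_-)$ has drift exactly $\pi(s_+-s_-)/n$ per step. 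This avoids needing any regularity of $N$, which is why the unfolding is by $N$.

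\textbf{Step 2 (exponential crossing times).} Between crossings, $\Delta_j$ sits near $0\bmod\pi$. By projective contraction (products of $A_v(t)$ are exponentially contracting, uniformly for $t$ in a compact subinterval of $\operatorname{int}\Sigma$), once $\Delta_j$ is away from $\pi\mathbb Z$ it is driven back to a lattice point within $O(\log n)$ steps. Thus the crossing dynamics is a rare-event renewal: the probability of a crossing in a block of length $\ell$ with $\log n\ll\ell\ll n$ is $(s_+-s_-)\ell/n+o(\ell/n)$, from the exact drift plus a martingale bound, and the probability of two crossings in one block is $o(\ell/n)$. The latter is our substitute for Minami's estimate. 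We expect it to be the main obstacle: we would bound the second moment of crossings in a block by conditioning on the first crossing and using contraction to show the phase difference resets, so that a second crossing costs another factor $O(\ell/n)$. Uniformity in $t$ near $E$ requires care, since $\nu_t$ may be singular; continuity of $t\mapsto\nu_t$ in the weak topology should suffice.

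\textbf{Step 3 (independence and Poisson limit).} We split $\{1,\dots,n\}$ into $n/\ell$ blocks separated by buffers of length $C\log n$. Contraction makes the crossing indicators in distinct blocks asymptotically independent; the buffers cost $o(1)$ expected crossings. Several disjoint windows $(s_-^{(i)},s_+^{(i)}]$ are treated jointly through the vector of crossing counts. Combining with Step~2, the standard Bernoulli-to-Poisson theorem gives independent Poisson counts with means equal to the window lengths. This identifies the finite-dimensional distributions of $\xi_{n,E}$ on intervals, hence $\xi_{n,E}\Rightarrow\Pi$ vaguely (Kallenberg's criterion, using that the boundary error of Sturm counting is $O(1)$ only at the end site, which we handle by a Dirichlet/boundary comparison costing $o(1)$).

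\textbf{Part (2).} If $N'(E)=\rho\in(0,\infty)$, then $n[N(E+x/(n\rho))-N(E)]\to x$ locally uniformly in $x$. Since $N$ is monotone and continuous, the map $x\mapsto n[N(E+x/(n\rho))-N(E)]$ converges to the identity, and the linearly scaled process is the pushforward of $\xi_{n,E}$ under a map converging to the identity. Vague convergence is preserved under such maps (the limit $\Pi$ a.s. has no atoms at fixed endpoints), so the claim follows from part~(1).
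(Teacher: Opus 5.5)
Your Step~1 is the paper's key identity: the lifted distribution function of the inverse stationary law has a constant mean increment, which gives the exact drift $\pi\Delta N$. Your Part~(2) reduction is a valid alternative to the paper's rerun of the count argument with linear endpoints. It pushes $\xi_{n,E}$ forward under the strictly increasing maps $x\mapsto n[N(E+x/(n\rho))-N(E)]\to x$.

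The gap is in Step~2, on which everything else rests. First, projective contraction cannot do what you ask of it. It controls only the time the lifted gap spends in the middle of $(0,\pi)$. A gap just below $\pi$ is projectively as close to $\pi\mathbb Z$ as a gap just above $0$. So ``$\Delta_j$ sits near $0\bmod\pi$'' does not say whether the crossing has happened, and nothing in your sketch prevents the gap from lingering below $\pi$ for a time of order $1/\Delta N$. The paper needs a second use of the drift here: stopping $W_j-\pi\Delta N\,j$ from $D_0\geq\pi-h$ gives $\pi\Delta N\,\EE T\leq\omega(h)+2\omega(\delta)$. Without this you control neither the state at block starts and ends nor the terminal state at time $n$.

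Second, the accuracy you claim, a crossing probability $(s_+-s_-)\ell/n+o(\ell/n)$ per block, does not follow from the drift plus a martingale bound. Optional stopping gives
\[
 \pi\PP(T\leq\ell)=\pi\Delta N\,\EE(T\wedge\ell)+W_0-\EE[W_\ell;T>\ell]+O(\omega(\delta)).
\]
The terms $W_0$, $\omega(\delta)$, the middle occupation (of size $\delta^r$) and the near-$\pi$ mass all vanish only qualitatively. Making them $o(\ell/n)$ for a prescribed mesoscopic $\ell$ would require comparing $\delta_n$ with $1/n$, that is, a lower bound on IDS increments. That bound is unavailable, and the paper is designed to avoid needing it. With $n/\ell\to\infty$ blocks, these errors cannot be summed.

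The missing idea is to do all estimates on the IDS time scale, where $o(1)$ errors suffice. The paper proves the uniform law $\Delta N\,T\Rightarrow\operatorname{Exp}(1)$ by applying the drift to $\rho^j(1+(s/\pi)W_j)$, with $\rho=(1+s\Delta N)^{-1}$. The two errors, $\EE|W_T-\pi|$ and $\Delta N\,\EE\sum_{j<T}|W_j|$, are removed by the overshoot bound and a three-region occupation estimate:
\begin{itemize}
\item near $0$, $W$ is small;
\item in the middle, contraction applies, with $\delta^r$ multiplied by $\Delta N\,\EE T\leq C$ rather than divided by $\Delta N$;
\item near $\pi$, the exit bound above applies.
\end{itemize}

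The reset after a crossing, which you want from contraction, is actually pathwise: $m\pi\leq D_{T_m}\leq m\pi+\delta$. Uniformity in the entering phase then yields independent exponential waiting times. With a \emph{fixed} number $m$ of blocks, letting $n\to\infty$ before $m\to\infty$, double crossings cost only $\Gamma_2(w_*/m)\leq w_*^2/(2m^2)$, where $w_*$ is the total unfolded length. So your ``main obstacle'' is cheap once the crossing law is known. The full-window count $K_*$ also rules out one block feeding two windows.

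Independence comes from conditioning on $\FF_{j_{b-1}}$, not from buffers. Decoupling by buffers would not be straightforward here, since block counts jump with the entering residual near $0$, which is exactly where the residual concentrates. The conditioning argument needs the residual gap to be small at block starts, a pinning lemma you have not supplied.

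Finally, the Sturm discrepancy is $0$ or $1$ per window, and no boundary comparison makes that $o(1)$ deterministically. You must show it has vanishing probability. The paper does this via $R_i(n)\to0$ in probability together with the non-atomic limit law $\nu_E^+$ of the terminal phase.
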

\end{samepage}

For a nondegenerate two-point law on $\{a,b\}$, $a<b$, the theorem
applies throughout the interior of $[a-2,a+2]\cup[b-2,b+2]$.
In particular, $v_j=\lambda\omega_j$ is allowed for every
$\lambda\ne0$ and every Bernoulli parameter
$\PP(\omega_j=1)=p\in(0,1)$. The two spectral bands form one
interval when $b-a\leq4$, including the touching case $b-a=4$.

Unfolding is useful even when the IDS has no density, as can occur
for Bernoulli potentials at large coupling \cite{MartinelliMicheli1987}.
The second conclusion requires only the derivative at $E$.

\subsection{Proof strategy}\label{sec:proof-strategy}
We obtain the spectral Poisson limit by studying first crossings
of a phase gap. For nearby energies
$E_-<E_+$, follow the lifted transfer-matrix phases $\theta_j$ and
$\phi_j$ through the same potential sample, starting both from zero.
Sturm oscillation relates the eigenvalue count in $(E_-,E_+]$ to
the number of crossings of $D_j=\theta_j-\phi_j$ through multiples
of $\pi$, up to a rounding error of one (Lemma~\ref{lem:sturm}). Set
\[
 T=\inf\{j\geq0:D_j\geq\pi\},\qquad
 \Delta N=N(E_+)-N(E_-).
\]
The main probabilistic result is the uniform exponential law
$\Delta N\,T\Longrightarrow\operatorname{Exp}(1)$ in
Theorem~\ref{thm:exit}. We will need this law for arbitrary
starting phases with a gap tending to zero, so that it can be
used again after each crossing.

To estimate $T$, we seek a function of the two phases with
constant conditional mean increment $\pi\Delta N$. The gap
itself has phase-dependent increments, so we first construct,
at each single energy, a phase function whose expected increment
is independent of the starting angle.

For this purpose, measure the separation of two angles by the
mass of the oriented arc between them. Equal expected increments
require this mass to be preserved on average under a forward
phase step. This is exactly inverse stationarity
(Section~\ref{sec:furstenberg}). Furstenberg's theorem provides
a non-atomic measure $\nu_t^-$ with this property
(Corollary~\ref{prop:furstenberg-used}), and we define
\[
 \mathcal H_t(x)=\pi\nu_t^-([0,x]),\quad 0\leq x\leq\pi,
 \qquad
 \mathcal H_t(x+\pi)=\mathcal H_t(x)+\pi.
\]
Writing $F_t(v,\theta)$ for one lifted phase step, inverse
stationarity and the Sturm formula give
(Lemma~\ref{lem:harmonic})
\[
 \int\bigl[\mathcal H_t(F_t(v,\theta))-\mathcal H_t(\theta)\bigr]
           \,d\mu(v)=\pi(1-N(t))
 \qquad\text{for every }\theta.
\]

The central construction is now to compare the two energies using
their own phase functions:
\[
 W_j=\mathcal H_{E_-}(\theta_j)-\mathcal H_{E_+}(\phi_j).
\]
Let $\FF_j$ record the potentials used before step $j$.
Since the next potential is independent of $\FF_j$, subtracting
the two single-energy identities gives
\[
 \begin{aligned}
 \EE[W_{j+1}-W_j\mid\FF_j]
 &=\pi(1-N(E_-))-\pi(1-N(E_+))\\
 &=\pi\Delta N,
 \end{aligned}
\]
(Proposition~\ref{prop:drift}).

For a vanishing initial gap, continuity and the vanishing
overshoot give $W_0\to0$ and $W_T\to\pi$ as $E_+-E_-\to0$.
The exact drift therefore yields
\[
 \pi\Delta N\,\EE T=\EE W_T-W_0,
 \qquad \Delta N\,\EE T\longrightarrow1
\]
(Lemma~\ref{lem:mean}).

To identify the law of $T$, we combine these mean exit-time
bounds with projective contraction from the positive Lyapunov
exponent (Lemma~\ref{lem:contraction}). This gives
$\Delta N\,\EE\sum_{j<T}|W_j|\to0$ (Lemma~\ref{lem:killed}).

The limits $\EE|W_T-\pi|\to0$ and
$\Delta N\,\EE\sum_{j<T}|W_j|\to0$ remove the two error terms
in the Laplace calculation (Section~\ref{sec:laplace-proof}).
For a vanishing initial gap, this yields
\[
 \EE e^{-s\Delta N T}\longrightarrow\frac1{1+s},
 \qquad s>0,
\]
uniformly in the starting phase.

After each crossing, subtracting the completed period leaves a
gap at most equal to the vanishing overshoot. The unused
potentials are independent of the past, and the uniform
hitting-time law applies to every entering phase. Thus each
conditional rescaled waiting time has the same exponential limit,
giving independent exponential waiting times in the limit and
hence Poisson crossing counts
(Corollaries~\ref{cor:restart} and~\ref{cor:counts}).

Different energy windows share the same potentials. To prove
their asymptotic independence, we divide the $n$ sites into $m$
spatial blocks. The gap left after removing completed periods
is small at block starts (Lemma~\ref{lem:residual-pinning}), so
the same conditional estimates apply there. For windows of IDS
mass of order $1/n$, the limiting single-crossing probabilities
have size $1/m$, while multiple crossings in the containing
energy interval have probability $O(m^{-2})$. The latter also bounds the chance
that one block contributes to two different windows. Taking
$n\to\infty$ first and then $m\to\infty$ gives the joint Poisson
law (Proposition~\ref{prop:joint-crossing-counts}).

Finally, the vanishing phase gap modulo $\pi$ can affect the Sturm rounding
only when the terminal phase is close to $\pi/2$ modulo $\pi$.
Convergence of that phase to a non-atomic law makes this event
negligible, so crossing counts and eigenvalue counts agree with
probability tending to one
(Lemma~\ref{lem:spectral-count-conversion}).

\subsection{Comparison with Bourgain and Hislop--Nakano}
\label{sec:comparison}
The classical Minami estimate cannot be applied directly to Bernoulli
potentials: its spectral-averaging proof requires regularity of the
single-site law, which an atomic law lacks
\cite[Section~1]{Bourgain2013}.
For Poisson statistics with Bernoulli disorder, the main earlier
results to compare with ours are Bourgain's theorem
\cite{Bourgain2013} and the work of Hislop and Nakano
\cite{HislopNakano2025}. Both use finite-volume counting estimates
to control the occurrence of two eigenvalues in a short interval.

\bigskip
\paragraph{Bourgain's argument.}
Bourgain proved Poisson statistics for H\"older-regular single-site
laws, which may be singular but are non-atomic. In that setting,
the H\"older concentration bound permits spectral averaging and a
Minami-type estimate
\cite[Section~1 and Proposition~5]{BourgainSingular2014}.
Bernoulli laws have atoms, so this step is unavailable and close
eigenvalues require a different argument
\cite[Section~1]{Bourgain2013}.

His Bernoulli theorem concerns symmetric two-point laws with coupling
in a specified class of small algebraic numbers, at energies in a
compact subinterval of $(-2,2)$
\cite[Propositions~1--2]{Bourgain2013}. The same arithmetic conditions,
together with small coupling and this energy restriction, allow him
to use group expansion to obtain both a $C^1$ density of states and
bounded densities of the Furstenberg measures
\cite{Bourgain2014Expansion}.

Bourgain's proof uses the bounded Furstenberg densities to obtain
a finite-volume Wegner estimate, and the $C^1$ density of states
to derive an asymptotic formula for the mean eigenvalue count.

Smoothness of the IDS alone still leaves the
problem of close eigenvalues. Bourgain handles near resonances by
turning two approximate eigenvectors into spectral events on disjoint
subintervals, where independence of the potential gives a product
bound. Together with localization, this gives the two-eigenvalue
bound needed for Poisson statistics
\cite[Sections~1 and~3--5]{Bourgain2013}.

\bigskip
\paragraph{What the hitting-time law changes.}
Our construction takes a different route: we determine the crossing
rate from the exact IDS drift of the auxiliary function $W_j$.
We therefore measure each energy window by its IDS mass and use
unfolded coordinates.
An unfolded window $(a,b]$ is an interval in the rescaled coordinate
$n[N(t)-N(E)]$. For large $n$, its corresponding energy window
$(E_-,E_+]$ has endpoints determined by
\[
 N(E_-)=N(E)+\frac an,\qquad N(E_+)=N(E)+\frac bn.
\]
Its IDS mass is therefore $\Delta N=(b-a)/n$.
Because the drift identity \eqref{eq:drift} uses this mass exactly,
identifying the intensity requires no local linear approximation
of $N$. Our argument neither assumes nor proves that the
density-of-states measure $dN$ has a density. Whenever $N'(E)$
exists and is finite and positive, a change of scale yields the
linear normalization in Theorem~\ref{thm:poisson}. Existence of
this derivative is not an assumption of the unfolded Poisson limit.

The required control of multiple counts also becomes a consequence
of the hitting-time theorem. Corollaries~\ref{cor:restart}
and~\ref{cor:counts} yield the two-arrival bound needed for spatial
subdivision. This control is obtained for phase crossings before
converting them to eigenvalue counts. The spectral Poisson argument
therefore needs no additional near-resonance estimate for a pair of
eigenfunctions.

\bigskip
\paragraph{Hislop--Nakano: unfolding and localization.}
\label{sec:hn-comparison}
Hislop and Nakano treat random polymer models, including ordinary
Anderson--Bernoulli operators. Away from critical energies, they
combine IDS unfolding with multiscale localization and independent
subsystems. Their Minami-type estimate separates eigenfunctions with
distant localization centers, controlled by Wegner estimates and
independence, from nearby pairs, controlled by an adaptation of
Bourgain's lemma
\cite[Sections~3.1 and~3.3--3.7; Appendix~D]{HislopNakano2025}.

Their Poisson theorem assumes the following local H\"older regularity
conditions on the IDS and its inverse: for some
$\rho_1,\rho_2\in(0,1]$ with $\rho_1\rho_2>2/3$,
\[
 \begin{aligned}
 |N(t)-N(t')|&\leq C|t-t'|^{\rho_1},\\
 |t-t'|&\leq C|N(t)-N(t')|^{\rho_2},
 \end{aligned}
 \qquad\text{for all }t,t'\text{ near }E.
\]
The inverse bound turns an unfolded interval of size $n^{-1}$ into
an energy interval of width $O(n^{-\rho_2})$. The product condition
allows a polynomial choice of subsystem size for which both multiple
eigenvalues and losses near subsystem boundaries are negligible
\cite[Assumption~3.2, Theorem~3.3, Remark~3.4 and~(3.29)--(3.33)]{HislopNakano2025}.
These are additional quantitative hypotheses; the usual H\"older
upper bound for $N$ does not verify them. The required inverse-IDS
regularity is not established there, and the authors state that they
know no example satisfying these conditions, leaving their
verification in concrete models open
\cite[Remark~3.4]{HislopNakano2025}. Their results also cover
correlated polymers and, under further hypotheses, clock statistics
at critical energies \cite[Theorem~2.18]{HislopNakano2025}, beyond
the iid setting considered here.

Both our proof and theirs use unfolding. The difference is that Hislop--Nakano
also need quantitative regularity to make the errors from spectral
subdivision small on the unfolded scale. Our error estimates are
already on the IDS time scale (Section~\ref{sec:proof-strategy});
they require only that the energy window shrinks, with no prescribed
rate. Thus the inverse-IDS bound and the restrictions on the
H\"older exponents do not enter our proof.

\subsection{Other related work}\label{sec:related}
The classical Poisson theory combines localization with estimates
on the number of eigenvalues in a short interval. Molchanov
\cite{Molchanov1981} proved a continuum Poisson limit in one
dimension. For discrete Anderson models, the Wegner and Minami
estimates control one and multiple eigenvalues, and localization
reduces the problem to almost independent boxes
\cite{Wegner1981,Minami1996}. Combes, Germinet, and Klein
\cite{CombesGerminetKlein2009} extended the counting estimates to
non-atomic laws, including singular laws. A standard finite-rank
comparison for spectral counts is recorded in
\cite[Lemma~2.2]{Bai1999}.

Poisson statistics have also been obtained for one-dimensional
random band matrices \cite{FanZou2026}. For lattice Bernoulli models,
localization has been proven near the bottom of the spectrum in
two, three and higher dimensions
\cite{DingSmart2020,LiZhang2022,LiLiuZhang2026}.
For symmetric two-point laws, the corresponding result near the
top follows by spectral reflection.

To study local statistics on the IDS scale, Germinet and Klopp
\cite{GerminetKlopp2014} developed an unfolded theory under
localization, counting estimates and quantitative non-flatness
conditions on the IDS. For bounded ergodic potentials, Craig and
Simon \cite{CraigSimon1983} prove the upper bound
$N(t+h)-N(t)\leq C/|\log h|$ for $0<h<1/2$.
For bounded nonconstant iid potentials in one dimension, this
improves locally to $N(t+h)-N(t)\leq Ch^\alpha$ for some
$\alpha>0$ \cite[Theorem~3]{LePage1989}; see also
\cite{CarmonaKleinMartinelli1987}. These upper bounds do not give
the lower bounds on IDS increments required for inverse-IDS control.

Furstenberg's theorem and Le Page's theory provide the stationary
laws and projective contraction for our transfer maps
\cite{Furstenberg1963,LePage82,LePage1989}. The inverse law
$\nu_t^-$ in Corollary~\ref{prop:furstenberg-used} defines
$\mathcal H_t(x)=\pi\nu_t^-([0,x])$ for $0\leq x\leq\pi$
(Section~\ref{sec:furstenberg}). Its increments are therefore
$\pi$ times the masses of oriented projective arcs. This arc-mass
construction is standard in random circle dynamics
\cite{Navas2011,GKGV2014}; see in particular
\cite[Proposition~1]{GelfertStenflo2017}.
In our setting, inverse stationarity makes the mean increment
independent of the starting phase (Lemma~\ref{lem:harmonic}).
Taking the difference along the coupled phase recursions gives $W_j$ in
Definition~\ref{def:auxiliary-gap}, with the exact IDS drift of
Proposition~\ref{prop:drift}. A formula expressing rotation-number
increments through stationary arc masses is given in
\cite[Theorem~2.4]{DuarteGorodetskiKleptsyn2025}.

For one-dimensional continuum Schr\"odinger operators with decaying
potentials, Kotani and Nakano \cite{KN17} prove Poisson statistics
by studying phase evolution and comparing it with diffusions.
Allez and Dumaz \cite{AD14} use related phase diffusions to prove
the small-$\beta$ Poisson limit of $\operatorname{Sine}_\beta$.
Critical bulk and edge limits for one-dimensional operators with
decaying or vanishing potentials are studied in
\cite{KVV12,HanCritical2026,Han23}.
In the present paper, the potential law stays fixed as the volume
grows.

\section{Phase coordinates and the IDS}
\label{sec:coordinates}

This section develops the phase description used in the proof.
Transfer matrices turn the eigenvalue equation into an angular
recursion, and Sturm oscillation recovers the eigenvalue count
from its terminal phase. We then compare the phases at two
energies and construct an auxiliary function of their phases
with an exact IDS drift. This gives the starting point for the
crossing-time estimates in the next section.

Throughout this section, $E\in\operatorname{int}\Sigma$ is the
fixed reference energy, and $E'$ denotes a variable energy.

\subsection{Transfer matrices and canonical lifts}
We begin by rewriting the eigenvalue equation as a first-order
recursion for pairs of consecutive solution values. This will
allow us to follow the direction of the solution vector and use
its phase to count eigenvalues.

For a trial energy $E'$, consider the solution of the difference
equation $H_\omega u=E'u$ with $u_0=0$ and $u_1=1$. These
conditions impose the left Dirichlet boundary and fix the
normalization; $E'$ is an eigenvalue of $H_n$ precisely when the
remaining boundary condition $u_{n+1}=0$ holds. The recursion
$u_{j+2}=(E'-v_j)u_{j+1}-u_j$ takes the matrix form
\[
 \begin{pmatrix}u_{j+2}\\u_{j+1}\end{pmatrix}
 =A_{E'}(v_j)\begin{pmatrix}u_{j+1}\\u_j\end{pmatrix},
 \qquad j\geq0,\qquad
 A_{E'}(v)=\begin{pmatrix}E'-v&-1\\1&0\end{pmatrix}.
\]
For $j\geq1$, write the transfer product in its order of action as
\[
 M_{E',j}=A_{E'}(v_{j-1})\cdots A_{E'}(v_0),
 \qquad M_{E',0}=\mathrm{Id}.
\]
Thus $M_{E',j}(1,0)^{\mathsf T}=(u_{j+1},u_j)^{\mathsf T}$.
The right Dirichlet condition concerns the direction of this
vector. We describe directions by the phase angle $\theta$, using
\[
 \mathbf u(\theta):=(\cos\theta,\sin\theta)^{\mathsf T},
 \qquad \theta\in\mathbb R.
\]
A direction determines $\theta$ only modulo $\pi$, whereas
eigenvalue counting requires the accumulated phase. We therefore
choose a real-valued lift of the angular action of each transfer
matrix, with the following normalization.

\begin{samepage}
\begin{definition}[Canonical phase lift]\label{def:phase-lift}
For $E',v\in\mathbb R$, the canonical phase lift
$F_{E'}(v,\cdot):\mathbb R\to\mathbb R$ is the unique continuous
function satisfying
\[
 \mathbf u\bigl(F_{E'}(v,\theta)\bigr)
   =\frac{A_{E'}(v)\mathbf u(\theta)}
          {\norm{A_{E'}(v)\mathbf u(\theta)}}
 \qquad(\theta\in\mathbb R)
\]
with the normalization and periodicity conditions
\begin{equation}\label{eq:lift}
 F_{E'}(v,\pi/2)=\pi,\qquad
 F_{E'}(v,\theta+\pi)=F_{E'}(v,\theta)+\pi.
\end{equation}
\end{definition}
\end{samepage}

To see explicitly that this lift is well defined, write $a=E'-v$.
For $-\pi/2<\theta<\pi/2$, the second component of
$A_{E'}(v)\mathbf u(\theta)$ is positive, so its argument lies
in $(0,\pi)$ and is given by
\[
 F_{E'}(v,\theta)=\frac{\pi}{2}
                  -\arctan(a-\tan\theta),
 \qquad -\frac{\pi}{2}<\theta<\frac{\pi}{2},
\]
where $\arctan$ takes values in $(-\pi/2,\pi/2)$.
The endpoint limits are $0$ and $\pi$, respectively. Extending
continuously to the endpoints and then by the second identity in
\eqref{eq:lift} constructs the lift on all of $\mathbb R$.
Continuity and the prescribed value at $\pi/2$ fix its branch
uniquely.

We will compare phases at different initial angles and nearby
energies, so we record how this lift varies in both arguments.
For a differentiable nonzero planar vector
$z(t)=R(t)(\cos\alpha(t),\sin\alpha(t))^{\mathsf T}$,
where $R(t)=\norm{z(t)}>0$ and $\alpha$ is a continuous argument,
differentiation gives
\[
 \det(z,z')=R^2\alpha',\qquad
 \alpha'=\frac{\det(z,z')}{\norm{z}^2}.
\]
In the present case,
\[
 z=A_{E'}(v)\mathbf u(\theta)
   =\begin{pmatrix}(E'-v)\cos\theta-\sin\theta\\\cos\theta\end{pmatrix},
 \qquad \partial_{E'}z=(\cos\theta,0)^{\mathsf T}.
\]
Since $\det A_{E'}(v)=1$ and
$\det(\mathbf u(\theta),\mathbf u'(\theta))=1$, we have
\[
 \det(z,\partial_\theta z)=1,\qquad
 \det(z,\partial_{E'}z)=-\cos^2\theta.
\]
Substituting into the argument-derivative formula gives
\begin{equation}\label{eq:derivatives}
 \partial_\theta F_{E'}(v,\theta)
   =\frac1{\norm{A_{E'}(v)\mathbf u(\theta)}^2}>0,
 \qquad
 \partial_{E'} F_{E'}(v,\theta)
   =-\frac{\cos^2\theta}{\norm{A_{E'}(v)\mathbf u(\theta)}^2}.
\end{equation}
The denominator is
$((E'-v)\cos\theta-\sin\theta)^2+\cos^2\theta$,
so the energy derivative lies in $[-1,0]$. Hence, for $\delta\geq0$,
\begin{equation}\label{eq:energyclose}
 0\leq F_{E'}(v,x)-F_{E'+\delta}(v,x)\leq\delta.
\end{equation}
Choose $M<\infty$ such that $|v_j|\leq M$ almost surely.
On a fixed compact energy interval $I$, all matrix norms and inverse
norms are bounded uniformly for $v\in[-M,M]$. The phase maps and
their inverses therefore have uniformly bounded Lipschitz constants.

\subsection{Sturm oscillation}
We now relate the accumulated phase to the eigenvalue count.
The right Dirichlet condition is met whenever the solution vector
is vertical, so eigenvalues occur when the terminal phase crosses
$\pi/2$ modulo $\pi$. The discrete Sturm theorem counts these
crossings \cite[Theorems~4.5, 4.7 and equation~(4.40)]{Teschl2000}.
We state the formula with the lift and endpoint conventions used
throughout the proof. Define the Dirichlet phases by
\begin{equation}\label{eq:dirichlet-phase}
 \theta_0(E')=0,\qquad
 \theta_{j+1}(E')=F_{E'}(v_j,\theta_j(E')).
\end{equation}
Then $\theta_n(E')$ is the continuous lifted argument of
$(u_{n+1},u_n)$.

The next lemma lets us count eigenvalues below $E'$ by computing
$\theta_n(E')$ through the phase recursion and taking an integer
part as in \eqref{eq:sturm}. We therefore do not need to solve
$\theta_n(t)=\pi/2\bmod\pi$ for each eigenvalue $t$.

\begin{samepage}
\begin{lemma}[Sturm phase identities]\label{lem:sturm}
For every realization and energy $E'$,
\begin{equation}\label{eq:sturm}
 \#\{1\leq j\leq n:\lambda_{j,n}<E'\}
    =n-\left\lfloor\frac{\theta_n(E')+\pi/2}{\pi}\right\rfloor.
\end{equation}
For $E'<E''$,
\begin{equation}\label{eq:sturm-interval}
 \#\{1\leq j\leq n:\lambda_{j,n}\in(E',E'']\}
 =\left\lceil\frac{\theta_n(E')-\pi/2}{\pi}\right\rceil
  -\left\lceil\frac{\theta_n(E'')-\pi/2}{\pi}\right\rceil.
\end{equation}
\end{lemma}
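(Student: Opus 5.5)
The plan is to obtain the first identity \eqref{eq:sturm} from three facts about $t\mapsto\theta_n(t)$: continuity and strict monotone decrease in $t$, its limit as $t\to-\infty$, and the characterization of eigenvalues as the energies where $\theta_n(t)\equiv\pi/2\pmod\pi$. The second identity \eqref{eq:sturm-interval} will then follow by subtraction, with one adjustment for the endpoint convention.

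\textbf{Step 1 (monotonicity).} By induction on $j$, $\theta_j(t)$ is continuous in $t$ and nonincreasing. Indeed, by \eqref{eq:derivatives},
\[
 \tfrac{d}{dt}\theta_{j+1}=\partial_\theta F\cdot\tfrac{d}{dt}\theta_j+\partial_{E'}F\le 0 ,
\]
since $\partial_\theta F>0$. To get strict decrease at the eigenvalue crossings, note the following. At a point where $\theta_n\equiv\pi/2$, one has $u_{n+1}=0$. Then $\tfrac{d}{dt}\theta_n$ equals $-\sum_{k\le n}u_k^2/\|(u_{n+1},u_n)\|^2<0$; this is the standard Wronskian/Prüfer identity obtained by unrolling the recursion for the derivative. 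Alternatively, since the derivative is strictly negative whenever $\cos\theta_j\ne0$ at some step, I can argue directly that $\theta_n$ is strictly decreasing: $\cos\theta_j=0$ means $u_{j+1}=0$, and two consecutive zeros $u_j=u_{j+1}=0$ are impossible. Hence each solution of $\theta_n(t)\equiv\pi/2$ is crossed exactly once, transversally and downward. These solutions are exactly the eigenvalues of $H_n$, because $\theta_n\equiv\pi/2 \pmod\pi$ iff $u_{n+1}=0$.

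\textbf{Step 2 (normalization as $t\to-\infty$).} Directly from the explicit formula for $F$: for $a=t-v\to-\infty$ and $\theta\in[0,\pi/2)$ bounded away from $\pi/2$, the value $F=\pi/2-\arctan(a-\tan\theta)$ tends to $\pi$ from below. By induction, $\theta_j(t)\to j\pi$ for $j\le n$. More precisely, $\theta_j(t)\in(j\pi-\pi/2,\,j\pi)$ for all $t$ below all eigenvalues. It suffices to show $\theta_n(t)\in(n\pi-\pi/2,n\pi+\pi/2)$ for $t$ very negative, which follows from the limit. At such $t$ there are no eigenvalues below $t$, and indeed $\lfloor(\theta_n+\pi/2)/\pi\rfloor=n$, so \eqref{eq:sturm} holds.

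\textbf{Step 3 (counting).} As $t$ increases, the right side of \eqref{eq:sturm} can change only when $\theta_n(t)$ passes a point of $\pi/2+\pi\mathbb Z$. By Step 1, each such passage is downward and occurs exactly at an eigenvalue. Suppose $\theta_n(\lambda)=\pi/2+k\pi$. Then the floor $\lfloor(\theta_n+\pi/2)/\pi\rfloor$ equals $k+1$ for $t<\lambda$ close to $\lambda$, and equals $k+1$ at $t=\lambda$ itself (the value is exactly $k+1$). It drops to $k$ just after $\lambda$. The count of eigenvalues $<t$ is likewise left-continuous, increasing by one just after $\lambda$. So both sides jump together and agree everywhere; eigenvalues are simple, so jumps are by one.

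For \eqref{eq:sturm-interval}, the count in $(E',E'']$ equals $\#\{<E''\}+\1\{E''\in\text{spec}\}-\#\{<E'\}-\1\{E'\in\text{spec}\}$. I would instead redo Step 3 with the ceiling: $n-\lceil(\theta_n(t)-\pi/2)/\pi\rceil$ counts eigenvalues $\le t$. This is by the same jump analysis, since the ceiling already drops at $t=\lambda$. Normalization: at very negative $t$ the ceiling gives $n$. Subtracting the values at $E''$ and $E'$ gives \eqref{eq:sturm-interval}.

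The main obstacle is Step 1's strict transversality at the crossings, together with keeping the lift and endpoint conventions consistent. I would rely on the derivative identity from \eqref{eq:derivatives} plus the non-vanishing of consecutive $u$'s.
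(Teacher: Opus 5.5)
Your argument is correct and follows the same route as the paper: strict decrease of $\theta_n$ in the energy from the differentiated phase recursion (your second alternative, propagating strict negativity from $\cos\theta_0=1$, is exactly the paper's induction, and your Wronskian formula $-\sum_{k\le n}u_k^2/(u_{n+1}^2+u_n^2)$ in fact holds at every energy), normalization $\theta_n\to n\pi$ as $E'\to-\infty$, matching jumps at the crossings of $\pi/2+\pi\mathbb Z$, and the ceiling version giving the non-strict count, which is subtracted at the two endpoints. The only slip is cosmetic: in Step 2 the reduced phases $\theta_j-j\pi$ approach $0$ from below, so you should apply the explicit formula for $\theta$ near $0$ in $(-\pi/2,\pi/2)$ rather than in $[0,\pi/2)$; this changes nothing in the argument.
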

\end{samepage}

\begin{proof}
The Dirichlet eigenvalue condition is $u_{n+1}(E')=0$, or
$\theta_n(E')=\pi/2\bmod\pi$. Differentiating the phase recursion
using \eqref{eq:derivatives} gives
\[
 \frac{d\theta_{j+1}}{dE'}
 =\frac{d\theta_j/dE'-\cos^2\theta_j}
        {\norm{A_{E'}(v_j)\mathbf u(\theta_j)}^2}.
\]
The initial phase is constant, so $d\theta_0/dE'=0$, while
$d\theta_1/dE'=-1/[(E'-v_0)^2+1]<0$.
In the displayed recursion the denominator is positive, and a
negative input derivative makes the numerator negative. Induction
therefore shows that $\theta_n$ is strictly decreasing for every
$n\geq1$.
As $E'$ increases from $-\infty$ to $+\infty$, $\theta_n(E')$
decreases from $n\pi$ to zero. Its successive half-integer crossings
give \eqref{eq:sturm}. The non-strict count is
$n-\lceil(\theta_n(E')-\pi/2)/\pi\rceil$; subtracting it at the two
endpoints gives \eqref{eq:sturm-interval}.
\end{proof}

\subsection{An auxiliary function for the phase gap}
\label{sec:furstenberg}
\label{sec:cdf}
The Sturm formula counts eigenvalues in an energy interval by
comparing the phases at its two endpoints. We therefore consider
two nearby energies $E_-<E_+$ in $\operatorname{int}\Sigma$
and run both phase recursions with the
same potentials, starting from deterministic phases $\theta_0$
and $\phi_0$:
\begin{equation}\label{eq:pair}
 \theta_{j+1}=F_{E_-}(v_j,\theta_j),\qquad
 \phi_{j+1}=F_{E_+}(v_j,\phi_j),\qquad
 D_j=\theta_j-\phi_j.
\end{equation}
Write $\Delta N=N(E_+)-N(E_-)>0$. Our goal is to estimate
when $D_j$ first reaches $\pi$. Its conditional mean increment
depends on the current phases, so a stopping argument would
require control of this dependence. We therefore construct an auxiliary
function $W_j$ of the two phases whose conditional mean increment
is known. If its values at the start and at the crossing are
also controlled, stopping its drift will give an estimate for
the crossing time of $D_j$.

To construct $W_j$, we first seek, for each energy $t$, a
function $\mathcal H_t$ of a single phase whose expected
increment does not depend on the starting angle. The inverse
stationary measure provides this function, as follows.

For a nonzero vector $u$, write $[u]\in\RP$ for its direction.
Identify $\RP$ with $\mathbb R/\pi\mathbb Z$ by
$[\mathbf u(\theta)]\leftrightarrow\theta\bmod\pi$, and let
$T_{t,v}$ be the circle map induced by $F_t(v,\cdot)$. Thus
\[
 T_{t,v}[u]=[A_t(v)u],\qquad
 T_{t,v}^{-1}[u]=[A_t(v)^{-1}u].
\]
If the transformed phases at two starting angles have the same
expected increment, their separation has zero expected change.
Suppose we measure their separation
by the mass of the oriented arc $J$ between them under a
probability measure $\nu$. A forward step sends $J$ to $T_{t,v}J$,
so the required condition is
\[
 \int\nu(T_{t,v}J)\,d\mu(v)=\nu(J).
\]
Since $(T_{t,v}^{-1})_*\nu(J)=\nu(T_{t,v}J)$, this is exactly
stationarity for the inverse maps. This explains why we use an
inverse stationary measure even though the phases evolve forward.
At energy $E'$, we denote this measure by $\nu_{E'}^-$ and write
\begin{equation}\label{eq:inverse}
 \nu_{E'}^-=\int (T_{E',v}^{-1})_*\nu_{E'}^-\,d\mu(v).
\end{equation}
Forward stationarity is defined in the same way, with $T_{t,v}$
in place of $T_{t,v}^{-1}$; its law is denoted by $\nu_t^+$.
Furstenberg's theorem gives both laws for our matrices
\cite{Furstenberg1963}; see also
\cite[Section~3.1]{HochmanSolomyak2017}.

\begin{samepage}
\begin{corollary}[Furstenberg]\label{prop:furstenberg-used}
For every real $E'$, the forward and inverse actions have unique
non-atomic stationary probabilities $\nu_{E'}^+$ and $\nu_{E'}^-$.
\end{corollary}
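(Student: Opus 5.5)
The plan is to deduce the corollary from Furstenberg's theorem on random products in $SL_2(\mathbb R)$, applied once to the forward measure and once to its inverse image. We let $\mathcal G_{E'}$ be the closed subgroup generated by $\{A_{E'}(v):v\in\operatorname{supp}\mu\}$. The theorem (in the form of \cite{Furstenberg1963} or \cite[Section~3.1]{HochmanSolomyak2017}) has two consequences we will use. If $\mathcal G_{E'}$ is noncompact and strongly irreducible, meaning no finite union of lines in $\mathbb R^2$ is invariant, then the stationary measure on $\RP$ is unique and non-atomic, and the top Lyapunov exponent is positive. The inverse action is the projective action of the pushforward law of $A_{E'}(v)^{-1}$ under $\mu$. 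That law generates $\mathcal G_{E'}$ as well, so the same hypotheses cover $\nu_{E'}^-$. Existence of both measures is automatic: $\RP$ is compact and the averaged operators are Feller, so Markov--Kakutani gives a fixed point.

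To verify the hypotheses, we take $a\ne b$ in $\operatorname{supp}\mu$, which is possible because $\mu$ is not a point mass, and set $\delta=b-a$. We then have
\[
 A_{E'}(a)A_{E'}(b)^{-1}=\begin{pmatrix}1&\delta\\0&1\end{pmatrix}.
\]
This is a nontrivial unipotent element of $\mathcal G_{E'}$, so the group is noncompact. The only line it fixes is $[e_1]$, and every finite invariant set of lines is pointwise fixed by some power of it. A nontrivial power of a unipotent element is still a nontrivial unipotent fixing only $[e_1]$, so any finite invariant union of lines must be $\{[e_1]\}$. But $A_{E'}(a)e_1=(E'-a,1)^{\mathsf T}$, which is not a multiple of $e_1$. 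Hence no finite union of lines is invariant, and $\mathcal G_{E'}$ is strongly irreducible for every real $E'$.

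The step I expect to require the most care is the passage from these group properties to non-atomicity and uniqueness for a measure $\mu$ with possibly finite support. The plan is to invoke Furstenberg's theorem as stated, since it needs only compact support and the two group conditions. If a direct argument for non-atomicity is wanted instead, we would use the standard one. Let $\beta$ be the maximal atom mass and $S$ the finite set of atoms achieving it. Stationarity forces $A^{-1}S=S$ for $\mu$-almost every $A$, and then for all $A$ in the support by continuity. So $S$ is invariant under $\mathcal G_{E'}$, which contradicts strong irreducibility. Uniqueness then follows from the contraction statement in Furstenberg's theory, given positivity of the Lyapunov exponent.

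The same argument applies verbatim to the inverse law, so we obtain both $\nu_{E'}^+$ and $\nu_{E'}^-$, each unique and non-atomic.
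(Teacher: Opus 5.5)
Your proposal is correct and follows essentially the same route as the paper. The paper's proof is the one-line observation that two distinct potential values give a nontrivial shear (your $A_{E'}(a)A_{E'}(b)^{-1}=\begin{pmatrix}1&b-a\\0&1\end{pmatrix}$) whose fixed line $[e_1]$ is not preserved by the transfers, so Furstenberg's theorem applies to both the forward and the inverse laws; your further details (the power-of-a-unipotent argument excluding finite invariant sets of lines, the inverse law generating the same closed group, existence by a fixed-point argument, and the maximal-atom argument for non-atomicity) correctly fill in what that line leaves implicit.
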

\end{samepage}

\begin{proof}
Two distinct potential values give a nontrivial shear in the
transfer group, and the transfers do not preserve its fixed
line; Furstenberg's theorem therefore applies to both laws.
\end{proof}

We now define the auxiliary phase function $\mathcal H_t$ as
the lifted cumulative distribution function of $\nu_t^-$, with
$\mathcal H_t(0)=0$ and
\begin{equation}\label{eq:phase-function}
 \mathcal H_t(x)=\pi\nu_t^-([0,x]),\qquad 0\leq x\leq\pi,
 \qquad
 \mathcal H_t(x+m\pi)=\mathcal H_t(x)+m\pi.
\end{equation}
Non-atomicity makes $\mathcal H_t$ continuous and makes the
endpoint convention immaterial. It is nondecreasing and may
have flat portions; no invertibility is required. The difference
$\mathcal H_t(x)-x$ is bounded and $\pi$-periodic.
We use $F$ for real lifts
and $T$ for maps of projective points and arcs.

The arc identity above makes the mean increment of
$\mathcal H_t$ independent of the starting angle. The IDS
determines the value of this mean.

\begin{samepage}
\begin{lemma}[The mean phase increment]\label{lem:harmonic}
For every initial angle $\theta$,
\begin{equation}\label{eq:exactone}
 \int[\mathcal H_{E'}(F_{E'}(v,\theta))
                  -\mathcal H_{E'}(\theta)]\,d\mu(v)
       =\pi(1-N(E')).
\end{equation}
For every $x\geq y$, the corresponding arc masses satisfy
\begin{equation}\label{eq:exactgap}
 \int[\mathcal H_{E'}(F_{E'}(v,x))
                  -\mathcal H_{E'}(F_{E'}(v,y))]\,d\mu(v)
       =\mathcal H_{E'}(x)-\mathcal H_{E'}(y).
\end{equation}
\end{lemma}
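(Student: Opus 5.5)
The plan is to prove \eqref{eq:exactgap} first, directly from inverse stationarity \eqref{eq:inverse}. It then says that the mean increment in \eqref{eq:exactone} does not depend on $\theta$. Finally, we identify the resulting constant through the Sturm formula of Lemma~\ref{lem:sturm} and the definition \eqref{eq:ids-definition} of $N$.

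\textbf{Step 1: the arc identity.} Fix $v$ and $x\geq y$, and write $x=y+k\pi+r$ with $k\in\mathbb Z_{\ge0}$ and $0\leq r<\pi$. We use two facts: $F_{E'}(v,\cdot)$ commutes with translation by $\pi$, by \eqref{eq:lift}, and $\mathcal H_{E'}(\cdot+k\pi)=\mathcal H_{E'}(\cdot)+k\pi$. Both sides of \eqref{eq:exactgap} then change by exactly $k\pi$ when $r$ is replaced by $x-y$, so it suffices to treat $0\leq x-y<\pi$.

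In that case strict monotonicity \eqref{eq:derivatives} and periodicity give $0\leq F(v,x)-F(v,y)<\pi$. Hence the lifted interval $[F(v,y),F(v,x)]$ projects injectively onto the projective arc $T_{E',v}J$, where $J$ is the projection of $[y,x]$. By the definition \eqref{eq:phase-function} of $\mathcal H_{E'}$ as $\pi$ times a lifted distribution function, we get
\[
\mathcal H_{E'}(F(v,x))-\mathcal H_{E'}(F(v,y))=\pi\,\nu_{E'}^-(T_{E',v}J),\qquad \mathcal H_{E'}(x)-\mathcal H_{E'}(y)=\pi\,\nu_{E'}^-(J).
\]
Non-atomicity from Corollary~\ref{prop:furstenberg-used} makes the endpoint conventions irrelevant. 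Integrating in $v$ and using $\nu(T_{E',v}J)=(T_{E',v}^{-1})_*\nu(J)$ together with \eqref{eq:inverse} gives \eqref{eq:exactgap}.

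\textbf{Step 2: constancy.} Set $g(\theta)=\int[\mathcal H_{E'}(F_{E'}(v,\theta))-\mathcal H_{E'}(\theta)]\,d\mu(v)$. Then \eqref{eq:exactgap} says exactly $g(x)=g(y)$ for all $x\geq y$, so $g\equiv c$ for some constant $c$. Integrability is not an issue, since $\mathcal H_{E'}(x)-x$ is bounded and $F_{E'}(v,\theta)-\theta$ is bounded for $|v|\leq M$.

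\textbf{Step 3: identifying $c$.} Run the Dirichlet phases \eqref{eq:dirichlet-phase} at energy $E'$. Since $v_j$ is independent of $\theta_j$, which is $\FF_j$-measurable, conditioning gives $\EE[\mathcal H_{E'}(\theta_{j+1})-\mathcal H_{E'}(\theta_j)\mid\FF_j]=g(\theta_j)=c$. Hence $\EE\mathcal H_{E'}(\theta_n)=nc$, and therefore $\EE\theta_n=nc+O(1)$ because $\mathcal H_{E'}(x)-x$ is bounded.

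By \eqref{eq:sturm}, $\#\{\lambda_{j,n}<E'\}=n-\theta_n/\pi+O(1)$ deterministically. Taking expectations and dividing by $n$ gives $\tfrac1n\EE\#\{\lambda_{j,n}<E'\}\to1-c/\pi$. The counts with $<E'$ and $\leq E'$ differ by at most $1$, because Dirichlet eigenvalues of a Jacobi matrix are simple. So the limit in \eqref{eq:ids-definition} equals $1-c/\pi$, i.e. $c=\pi(1-N(E'))$, which is \eqref{eq:exactone}.

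The only delicate point I expect is the bookkeeping in Step 1. One must check that, when the lifted gap is less than $\pi$, the image interval does not wrap around the circle, so that the lifted $\mathcal H$-difference really equals the arc mass; this is where monotonicity of the lift and non-atomicity are used. The remaining steps are routine.
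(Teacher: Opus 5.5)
Your proposal is correct and follows the paper's proof essentially step for step: reduce the arc identity to gaps below $\pi$ via the shift identities, write lifted $\mathcal H$-differences as $\pi$ times arc masses and apply inverse stationarity, deduce that the mean increment is constant, and identify the constant from the Dirichlet phase recursion and the Sturm formula. Your extra remarks on integrability and on the strict versus non-strict eigenvalue count only make explicit details that the paper's proof leaves implicit.
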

\end{samepage}

\begin{proof}
For $0\leq x-y<\pi$, let $J$ be the positively oriented arc from
$y$ to $x$. Since $F_{E'}(v,\cdot)$ is an increasing degree-one
lift,
\[
 \mathcal H_{E'}(F_{E'}(v,x))
      -\mathcal H_{E'}(F_{E'}(v,y))
   =\pi\nu_{E'}^-(T_{E',v}J).
\]
The lifted difference includes the wrapped part when an arc
crosses the cut. Inverse stationarity gives
\[
 \int\nu_{E'}^-(T_{E',v}J)\,d\mu(v)
 =\int(T_{E',v}^{-1})_*\nu_{E'}^-(J)\,d\mu(v)
 =\nu_{E'}^-(J),
\]
which proves \eqref{eq:exactgap} for $0\leq x-y<\pi$.
The shift identities \eqref{eq:lift} and \eqref{eq:phase-function}
extend it to all $x\geq y$.

By \eqref{eq:exactgap}, the average increment on the left side
of \eqref{eq:exactone} is independent of $\theta$; call it $c$.
To identify $c$, let $\psi_j$ be the phase at energy $E'$ with
$\psi_0=0$. Summing the mean increments gives
$\EE\mathcal H_{E'}(\psi_n)=nc$.
Since $\mathcal H_{E'}(x)-x$ is bounded,
$\EE\psi_n/n\to c$. The Sturm identity
\eqref{eq:sturm} therefore gives $N(E')=1-c/\pi$.
Thus $c=\pi(1-N(E'))$, proving \eqref{eq:exactone}.
\end{proof}

Lemma~\ref{lem:harmonic} gives a constant mean increment for
$\mathcal H_{E_-}(\theta_j)$ and for
$\mathcal H_{E_+}(\phi_j)$. Their difference is therefore the
auxiliary function we need.

\begin{samepage}
\begin{definition}[The auxiliary function $W_j$]\label{def:auxiliary-gap}
For the coupled phases in \eqref{eq:pair}, let
$\FF_j=\sigma(\theta_0,\phi_0,v_0,\ldots,v_{j-1})$.
Define
\begin{equation}\label{eq:W}
 W_j=\mathcal H_{E_-}(\theta_j)-\mathcal H_{E_+}(\phi_j).
\end{equation}
\end{definition}
\end{samepage}

\begin{remark}[Why we use $W_j$]\label{rem:auxiliary-gap}
We introduce $W_j$ as an auxiliary function of the two phases
to study the gap $D_j$. Its advantage is that its mean increase
at each step is the known constant $\pi\Delta N$, while that of
$D_j$ depends on the current phases. We can therefore use the
drift identity for $W_j$ at stopping times defined using $D_j$.
For nearby energies, the values of $W_j$ are controlled when
$D_j$ is near $0$ or $\pi$, so this identity gives estimates
for the time it takes $D_j$ to cross a period. The details are
given in Lemma~\ref{lem:mean}.
\end{remark}

The same potential is used in both recursions, so the two phases
are not independent. The mean-increment formula still applies
to each term separately, and subtracting gives the desired drift.

\begin{samepage}
\begin{proposition}[Exact IDS drift]\label{prop:drift}
For the coupled phases in \eqref{eq:pair} and the transformed
gap in \eqref{eq:W},
\begin{equation}\label{eq:drift}
 \EE[W_{j+1}-W_j\mid\FF_j]=\pi \Delta N
      =\pi[N(E_+)-N(E_-)].
\end{equation}
\end{proposition}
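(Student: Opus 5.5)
The plan is to split $W_{j+1}-W_j$ into its two single-energy parts and apply Lemma~\ref{lem:harmonic} to each one separately. Write
\[
 W_{j+1}-W_j
 =\bigl[\mathcal H_{E_-}(F_{E_-}(v_j,\theta_j))-\mathcal H_{E_-}(\theta_j)\bigr]
 -\bigl[\mathcal H_{E_+}(F_{E_+}(v_j,\phi_j))-\mathcal H_{E_+}(\phi_j)\bigr].
\]

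The first step is to check measurability and independence. By \eqref{eq:pair}, the phases $\theta_j$ and $\phi_j$ are deterministic functions of $\theta_0,\phi_0,v_0,\ldots,v_{j-1}$, so they are $\FF_j$-measurable. The initial phases are deterministic and the potentials are iid, so $v_j$ is independent of $\FF_j$ and has law $\mu$. Each bracket is a bounded measurable function of $(v_j,\theta_j)$ or of $(v_j,\phi_j)$. Boundedness holds because $\mathcal H_t(x)-x$ is bounded, and because $|F_t(v,x)-x|$ is bounded uniformly for $|v|\le M$ by the explicit arctan formula and periodicity. Both brackets are therefore integrable.

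The second step applies the freezing (substitution) lemma for conditional expectations. For a bounded measurable $g$, a $\mathcal G$-measurable $X$, and $Y$ independent of $\mathcal G$, we have $\EE[g(Y,X)\mid\mathcal G]=\int g(y,X)\,d\mu(y)$. Applying this with $\mathcal G=\FF_j$ reduces each conditional expectation to the integral in \eqref{eq:exactone}, evaluated at $\theta=\theta_j$ for the first bracket and at $\theta=\phi_j$ for the second.

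Lemma~\ref{lem:harmonic} holds for every initial angle. The conditional expectations are therefore the constants $\pi(1-N(E_-))$ and $\pi(1-N(E_+))$. Subtracting them gives $\pi[N(E_+)-N(E_-)]=\pi\Delta N$, which is \eqref{eq:drift}.

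The two phases are correlated because they share $v_j$. This causes no difficulty, since linearity of conditional expectation lets each bracket be treated on its own.

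The one point that needs care is measurability of $(v,\theta)\mapsto\mathcal H_t(F_t(v,\theta))$, which is needed for the freezing lemma. It holds because $F_t$ is jointly continuous and $\mathcal H_t$ is continuous, as $\nu_t^-$ is non-atomic by Corollary~\ref{prop:furstenberg-used}.
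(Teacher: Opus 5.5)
Your proposal is correct and follows the paper's proof: you split the increment into its two single-energy parts, use the independence of $v_j$ from $\FF_j$ to reduce each conditional expectation to the integral in \eqref{eq:exactone}, and subtract. Your additional measurability, boundedness and freezing-lemma checks are valid and make explicit what the paper's one-line argument leaves implicit.
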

\end{samepage}

\begin{proof}
Since $v_j$ is independent of $\FF_j$, \eqref{eq:exactone} gives
\[
 \begin{aligned}
 \EE[W_{j+1}-W_j\mid\FF_j]
 &=\pi(1-N(E_-))-\pi(1-N(E_+))\\
 &=\pi \Delta N.
 \end{aligned}
\]
\end{proof}

The function $\mathcal H_t$ also varies continuously with $t$.
To compare nearby energies, we use the uniform continuity
estimates below.

\begin{samepage}
\begin{lemma}[Continuity of the stationary laws and phase functions]
\label{lem:continuity}
The stationary laws $\nu_t^+$ and $\nu_t^-$ vary weakly
continuously with $t$. The normalized lifts satisfy
$\|\mathcal H_t-\mathcal H_{t'}\|_\infty\to0$ as $t'\to t$.
Their differences are $\pi$-periodic, so this norm can be taken
over $[0,\pi]$ or over $\mathbb R$.

On a compact energy interval $I$, define
\[
 \omega_{\mathrm{en}}(\delta)
 =\sup_{\substack{t,t'\in I\\|t-t'|\leq\delta}}
       \|\mathcal H_t-\mathcal H_{t'}\|_\infty,
 \qquad
 \omega_{\mathrm{ph}}(h)
 =\sup_{t\in I}\sup_{|x-y|\leq h}
       |\mathcal H_t(x)-\mathcal H_t(y)|.
\]
Here $\mathrm{en}$ stands for energy and $\mathrm{ph}$ for phase.
Both moduli tend to zero with their arguments.
\end{lemma}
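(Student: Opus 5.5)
The argument has three steps: weak continuity of the stationary laws, upgrading this to uniform convergence of the lifted distribution functions, and then uniformity on a compact interval $I$.

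\emph{Step 1: weak continuity.} Let $t_k\to t$, and regard $\nu_{t_k}^-$ as probability measures on the compact space $\RP$. By compactness, every subsequence has a further subsequence converging weakly to some probability measure $\nu$. For fixed $v$, the map $(s,[u])\mapsto T_{s,v}^{-1}[u]=[A_s(v)^{-1}u]$ is jointly continuous. This continuity is uniform for $v\in[-M,M]$ and $s$ in a compact set. Hence, for every continuous $f$ on $\RP$,
\[
 \int\!\!\int f\circ T_{t_k,v}^{-1}\,d\nu_{t_k}^-\,d\mu(v)
 \longrightarrow \int\!\!\int f\circ T_{t,v}^{-1}\,d\nu\,d\mu(v).
\]
Passing to the limit in \eqref{eq:inverse} therefore shows that $\nu$ is inverse stationary at energy $t$. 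By the uniqueness in Corollary~\ref{prop:furstenberg-used}, $\nu=\nu_t^-$, so the whole sequence converges to $\nu_t^-$. The forward laws $\nu_t^+$ are handled identically.

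\emph{Step 2: uniform convergence of $\mathcal H$.} The limit $\nu_t^-$ is non-atomic, so its distribution function $x\mapsto\nu_t^-([0,x])$ is continuous on $[0,\pi]$. Every point of $[0,\pi]$ is then a continuity point, and weak convergence gives pointwise convergence of $\nu_{t_k}^-([0,x])$. Pólya's theorem then upgrades this to uniform convergence: the functions are monotone and the limit is continuous on a compact interval. The one subtlety is the cut point $0\equiv\pi$ of the circle. There is no boundary mass, because $\nu_t^-(\{0\})=0$, so arcs $[0,x]$ are continuity sets. This gives $\|\mathcal H_{t_k}-\mathcal H_t\|_\infty\to0$ on $[0,\pi]$. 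Both $\mathcal H_t(x)-x$ and $\mathcal H_{t'}(x)-x$ are $\pi$-periodic, so their difference is $\pi$-periodic and the sup over $\mathbb R$ equals the sup over $[0,\pi]$.

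\emph{Step 3: the moduli on a compact $I$.} For $\omega_{\mathrm{en}}$, note that $t\mapsto\mathcal H_t$ is continuous from $I$ into the bounded functions with the sup norm, by Step 2. A continuous map on a compact set is uniformly continuous, so $\omega_{\mathrm{en}}(\delta)\to0$.

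For $\omega_{\mathrm{ph}}$, I argue by contradiction. Suppose there are $\varepsilon>0$, $t_k\in I$ and $|x_k-y_k|\le h_k\to0$ with $|\mathcal H_{t_k}(x_k)-\mathcal H_{t_k}(y_k)|\ge\varepsilon$. By periodicity we may take $x_k,y_k$ in $[0,2\pi]$. Passing to a subsequence, $t_k\to t\in I$ and $x_k,y_k\to x$. Now split
\[
 |\mathcal H_{t_k}(x_k)-\mathcal H_{t_k}(y_k)|
 \le 2\|\mathcal H_{t_k}-\mathcal H_t\|_\infty+|\mathcal H_t(x_k)-\mathcal H_t(y_k)|.
\]
The first term tends to zero by Step 2. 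The second tends to zero by continuity of $\mathcal H_t$, which follows from non-atomicity. This contradicts the choice of $\varepsilon$, so $\omega_{\mathrm{ph}}(h)\to0$.

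\emph{Main obstacle.} The delicate point is Step 1's passage to the limit. It requires joint continuity of the inverse action in the energy parameter, uniformly over the compact support of $\mu$. It also relies essentially on the uniqueness of the inverse stationary law from Corollary~\ref{prop:furstenberg-used}, and that uniqueness must be available at every energy in $I$. Non-atomicity at every $t\in I$ is equally essential: it is what makes Pólya's theorem and the cut-point argument of Step 2 work.
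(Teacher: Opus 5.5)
Your proposal is correct and follows essentially the same route as the paper: weak continuity from compactness, passage to the limit in the stationarity equation uniformly over the compact potential support, and uniqueness; then the non-atomic upgrade to uniform convergence of the distribution functions (your appeal to P\'olya's theorem is exactly the paper's finite-partition monotonicity argument); and finally compactness of $I$ for both moduli, which you spell out in more detail than the paper's one-line remark. One small point of phrasing: $\mathcal H_t$ itself is unbounded, so in Step~3 the continuity should be stated for $t\mapsto(x\mapsto\mathcal H_t(x)-x)$ into bounded functions, or directly for the metric $\|\mathcal H_t-\mathcal H_{t'}\|_\infty$, which changes nothing in the argument.
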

\end{samepage}

\begin{proof}
For either sign, stationarity means that, for every $g\in C(\RP)$,
\[
 \int g(x)\,d\nu_{t_m}^{\pm}(x)
 =\int_{\RP}\int_{\mathbb R}
       g(T_{t_m,v}^{\pm1}x)\,d\mu(v)\,d\nu_{t_m}^{\pm}(x).
\]
If $t_m\to t$, continuity of the matrix action and compactness
of the potential support and the projective line give uniform
convergence of $g(T_{t_m,v}^{\pm1}x)$ to $g(T_{t,v}^{\pm1}x)$.
Any weak subsequential limit of $\nu_{t_m}^{\pm}$ therefore
satisfies the stationary equation at $t$. Uniqueness identifies
it with $\nu_t^{\pm}$, proving weak continuity.

For the inverse laws, non-atomicity of the limiting measure
upgrades weak convergence to uniform convergence of the
normalized distribution functions. Indeed, for each $\varepsilon>0$
choose a finite partition of $[0,\pi]$ on which each increment
of $\mathcal H_t$ is less than $\varepsilon$. Weak convergence
gives convergence at these partition points, and monotonicity
bounds the error between them by the endpoint errors plus
$\varepsilon$. On compact $I$, the resulting
continuous family of continuous lifts is uniformly continuous
in energy and equicontinuous in phase. This gives both moduli.
\end{proof}

\section{First hitting times and the Poisson limit}
\label{sec:exit}

The purpose of this section is to prove the uniform first
hitting-time law in Theorem~\ref{thm:exit}. The key conceptual
reduction is that this law leads to the spectral Poisson limit
of Theorem~\ref{thm:poisson}.
Recall the coupled phases in \eqref{eq:pair}, with
$D_j=\theta_j-\phi_j$ and $K_j=\lfloor D_j/\pi\rfloor$.
For the Dirichlet initial phases, the Sturm identity
\eqref{eq:sturm-interval} of Lemma~\ref{lem:sturm} gives
\[
 0\leq\#\{k:\lambda_{k,n}\in(E_-,E_+]\}-K_n\leq1.
\]
We therefore study the times at which $D_j$ completes a period.
Theorem~\ref{thm:exit} asserts that
\[
 \begin{gathered}
 T_m=\inf\{j\geq0:D_j\geq m\pi\},\qquad
 \Delta N=N(E_+)-N(E_-),\\
 \Delta N\,T_1\Longrightarrow\operatorname{Exp}(1)
 \qquad(E_+-E_-\to0),
 \end{gathered}
\]
uniformly in the initial phase and in an initial gap tending to
zero. The uniformity in Theorem~\ref{thm:exit} lets us restart
after each crossing, using the fresh iid potentials and the
vanishing overshoot. With $T_0=0$, the resulting limits are,
for each fixed $M$,
\[
 \begin{gathered}
 \Delta N\bigl(T_1,T_2-T_1,\ldots,T_M-T_{M-1}\bigr)
       \Longrightarrow(X_1,\ldots,X_M),\qquad
       X_i\overset{\mathrm{iid}}{\sim}\operatorname{Exp}(1),\\
 K_{\lfloor t/\Delta N\rfloor}
       \Longrightarrow\operatorname{Poisson}(t),\qquad t\geq0.
 \end{gathered}
\]
We state these consequences of Theorem~\ref{thm:exit} as
Corollaries~\ref{cor:restart} and~\ref{cor:counts}.
Section~\ref{sec:pointprocess} then proves independence across
energy windows and removes the one-count error above, giving
$\xi_{n,E}\Longrightarrow\Pi$.

\begin{samepage}
We prove Theorem~\ref{thm:exit} in three main parts.
\begin{enumerate}[label=(\roman*)]
\item \emph{Positive Lyapunov exponent and contraction.}
This is the main probabilistic input. Positive growth of transfer
products gives projective contraction, which controls how long
the gap stays away from the two ends of a period
(Section~\ref{sec:contraction}).
\item \emph{The auxiliary function $W_j$.}
Definition~\ref{def:auxiliary-gap} and
Proposition~\ref{prop:drift} give the exact drift
$\EE[W_{j+1}-W_j\mid\FF_j]=\pi\Delta N$.
Stopping this identity gives bounds on the mean hitting time and on the time
remaining near the upper end. Together with contraction, these
bounds control the total size of $W_j$ before the crossing
(Section~\ref{sec:mean-occupation}).
\item \emph{The Laplace transform.}
We apply the drift identity to a discounted affine function of
$W_j$. The remaining estimates make the error terms vanish,
giving $\EE e^{-s\Delta N T_1}\to(1+s)^{-1}$
(Section~\ref{sec:laplace-proof}).
\end{enumerate}
\end{samepage}

Fix $E\in\operatorname{int}\Sigma$. For an ordered energy pair
$E_-<E_+$ near $E$, write
\[
 \delta=E_+-E_-,\qquad
 \Delta N=N(E_+)-N(E_-)>0.
\]
The energy interval $I$ will be chosen in
Lemma~\ref{lem:contraction}. All constants and moduli below are
uniform for energy pairs in $I$ and may depend on $\mu$ and $E$.

\subsection{The first hitting-time problem}
Recall the coupled phases $\theta_j$ and $\phi_j$ from
\eqref{eq:pair}, driven by the same potentials at $E_-$ and $E_+$.
Their lifted gap is $D_j=\theta_j-\phi_j$.
Start from deterministic phases with
$0\leq D_0=\theta_0-\phi_0<\pi$. The first hitting time of one
full period is
\[
 T=\inf\{j\geq0:D_j\geq\pi\}.
\]
For successive crossings and their counts, also set
\[
 T_m=\inf\{j\geq0:D_j\geq m\pi\},\qquad
 K_j=\lfloor D_j/\pi\rfloor.
\]
We regard the spatial index $j$ as the time parameter of this
process. Thus $T_1=T$. The time $T_m$ records the first crossing
of $m\pi$, and $K_j$ records the current number of complete periods.
We take $\inf\varnothing=\infty$ and use $W_j$ and $\FF_j$ from
Definition~\ref{def:auxiliary-gap}. The conditional drift of
$W_j$ is $\pi\Delta N$ by Proposition~\ref{prop:drift}.

We first state the exponential law for a vanishing initial gap.
The more general Laplace-transform profile will be proved at
the same time.

\begin{samepage}
\begin{theorem}[Uniform exponential crossing law]\label{thm:exit}
There exists a compact interval $I\subset\operatorname{int}\Sigma$
with $E$ in its interior such that, for every $h_\delta\geq0$
tending to zero,
\begin{equation}\label{eq:exp}
 \Delta N\,T\ \Longrightarrow\ \operatorname{Exp}(1)
 \qquad(\delta\downarrow0),
\end{equation}
uniformly over $E_-<E_+$ in $I$, the initial phase $\theta_0$,
and $0\leq D_0\leq h_\delta$.

More generally, for every fixed $s>0$,
\begin{equation}\label{eq:affineprofile}
 \sup_{\theta_0\in\mathbb R,\,0\leq D_0<\pi}
 \left|\EE e^{-s\Delta N\,T}
       -\frac{1+(s/\pi)W_0}{1+s}\right|\longrightarrow0,
\end{equation}
uniformly over the same energy pairs as $\delta\downarrow0$.
\end{theorem}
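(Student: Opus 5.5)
The proof applies the exact drift of Proposition~\ref{prop:drift} to a discounted affine functional of $W_j$, stopped at $T$. Set $\lambda=s\Delta N$ and $q=e^{-\lambda}$, and consider
\[
 Z_j=q^{\,j}\Bigl(1+\tfrac{s}{\pi}W_j\Bigr).
\]
Using the drift $\pi\Delta N$,
\[
 \EE[Z_{j+1}-Z_j\mid\FF_j]
 =q^{\,j}\Bigl[(q-1)\Bigl(1+\tfrac{s}{\pi}W_j\Bigr)+q\,s\Delta N\Bigr]
 =q^{\,j}\Bigl[-(1-q)\tfrac{s}{\pi}W_j+O(\lambda^2)\Bigr],
\]
since $1-q=\lambda+O(\lambda^2)$. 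Optional stopping at $T\wedge L$, followed by $L\to\infty$, is justified because $T<\infty$ a.s. (this follows from the mean bound below) and $W$ grows at most linearly in $j$. It gives
\[
 \EE\bigl[q^{T}(1+\tfrac{s}{\pi}W_T)\bigr]-\bigl(1+\tfrac{s}{\pi}W_0\bigr)
 =-\tfrac{s}{\pi}(1-q)\,\EE\sum_{j<T}q^jW_j+O\bigl(\lambda^2\,\EE T\bigr).
\]
If $\EE|W_T-\pi|\to0$, $\Delta N\,\EE T=O(1)$, and $\Delta N\,\EE\sum_{j<T}|W_j|\to0$, uniformly in the initial data, then the left side becomes $(1+s)\EE q^T-(1+\tfrac{s}{\pi}W_0)+o(1)$ and the right side is $o(1)$. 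This is \eqref{eq:affineprofile}. For $D_0\le h_\delta$ we have $|W_0|\le\omega_{\mathrm{en}}(\delta)+\omega_{\mathrm{ph}}(h_\delta)\to0$, so the Laplace transforms converge to $1/(1+s)$ uniformly, which gives \eqref{eq:exp}.

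The three inputs are proved in order. \emph{Endpoint control.} At $T$ the overshoot satisfies $\pi\le D_T<\pi+\delta+C\,\delta$-type bounds. This follows from \eqref{eq:energyclose} and the Lipschitz bounds: one step raises $D$ by at most $C(D_{j}^{+}\text{-increment})$, and for $D_{T-1}<\pi$ the step increment of $D$ is at most the energy shift $\delta$ plus the phase-map expansion of a gap near $\pi$, which is in turn close to a gap near $0$ modulo $\pi$. So $|W_T-\pi|\le\omega_{\mathrm{en}}(\delta)+\omega_{\mathrm{ph}}(C\delta)$ uniformly. \emph{Mean bound.} Stopping the drift identity gives $\pi\Delta N\,\EE T=\EE W_T-W_0$. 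Since $W$ is bounded by $D$ plus errors, this yields $\Delta N\,\EE T\le C$ uniformly over $0\le D_0<\pi$. Finiteness of $\EE T$ itself needs an a priori argument: truncate at $L$ and use that $W_{T\wedge L}$ is bounded on $\{D<\pi\}$. This is the natural place to choose $I$, small enough that all constants are uniform.

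\emph{Occupation estimate (main obstacle).} One must show $\Delta N\,\EE\sum_{j<T}|W_j|\to0$. In other words, before crossing, the gap spends a negligible fraction of the rescaled time away from $0$ (mod the completed crossing). My plan is to use projective contraction (Lemma~\ref{lem:contraction}). With positive Lyapunov exponent, two phases at the same energy separated by an arc of $\nu^-$-mass bounded away from $0$ and $\pi$ are driven in $O(1)$ expected steps (geometric tails, uniformly in $I$) into a configuration where the arc mass is within $\varepsilon$ of $0$ or of $\pi$. The energy difference perturbs this only by $O(\delta)$ per step, which is negligible over $O(1)$ steps. I would split time before $T$ into excursions of $W$ away from $[0,\varepsilon]$. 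Each excursion costs $O(1)$ expected steps. An excursion ends near $\pi$ with at least comparable probability, and ending near $\pi$ forces a crossing soon after. By \eqref{eq:exactgap} the arc mass is a martingale up to the $O(\Delta N)$ drift, so an excursion started at level $\varepsilon$ reaches $\pi$ with probability about $\varepsilon/\pi$. Hence the expected number of excursions is $O(1/\varepsilon)$, and the time spent with $W_j\ge\varepsilon$ is $O(1/\varepsilon)$ steps. This is $o(1/\Delta N)$ since $\Delta N\to0$. The remaining time contributes at most $\varepsilon\,\EE T=O(\varepsilon/\Delta N)$. Letting $\delta\to0$ and then $\varepsilon\to0$ gives the claim. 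Negative values of $W$ near $0$ are handled the same way, since $D\ge0$ up to $O(\delta)$ errors keeps $W\ge-o(1)$. The delicate points are making the "reach $\pi$ with probability $\asymp\varepsilon$" step rigorous via optional stopping for the near-martingale arc mass, and keeping the contraction constants uniform in the starting phase and the energy pair.
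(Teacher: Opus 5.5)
Your Laplace computation, the endpoint bound and the stopped mean identity match the paper's. Using $q=e^{-s\Delta N}$ instead of the paper's $(1+s\Delta N)^{-1}$ only adds an $O((\Delta N)^2\EE T)$ error. Reducing everything to $\Delta N\,\EE\sum_{j<T}|W_j|\to0$ is also right. The overshoot bound needs no Lipschitz expansion: $D_{T-1}<\pi$ and monotonicity give $\theta_T<F_{E_-}(v_{T-1},\phi_{T-1})+\pi\le\phi_T+\pi+\delta$.

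The gap is in the occupation estimate, at the upper end. You assert that each excursion costs $O(1)$ expected steps and that ending near $\pi$ forces a crossing soon after. From this you conclude that the time with $W_j\ge\varepsilon$ is $O(1/\varepsilon)$. Contraction cannot give this, because it treats gaps near $0$ and near $\pi$ alike, and the claim is in fact false. Write $x_j=\pi-D_j$. Monotonicity, the energy bound $0\le F_{E_-}-F_{E_+}\le\delta$, and $\partial_\theta F\ge 1/B$ with $B\ge\sup\|A_t(v)\|^2$ give $x_{j+1}\ge x_j/B-\delta$ for $j<T$. So a path that enters $\{D\ge\pi-h\}$ from below must stay there at least $c\log(h/\delta)-C$ steps before it can cross, with $W_j\approx\pi$ throughout. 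This is unbounded as $\delta\downarrow0$ at fixed $h$, and nothing in your argument compares it with $1/\Delta N$.

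What suffices is that the expected time near $\pi$ be $o(1/\Delta N)$. This comes from the drift, not from contraction: restart the mean identity at $\sigma_h=\inf\{j:D_j\ge\pi-h\}$. On $\{\sigma_h<T\}$ you have $\pi-W_{\sigma_h}\le\omega(h)+\omega(\delta)$ and $W_T\le\pi+\omega(\delta)$, so $\pi\Delta N\,\EE[T-\sigma_h\mid\FF_{\sigma_h}]\le\omega(h)+2\omega(\delta)$. Every visit to $[\pi-h,\pi)$ before $T$ lies in $[\sigma_h,T)$, so this region contributes $C[\omega(h)+\omega(\delta)]$.

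With this in hand, your excursion count and the delicate $\varepsilon/\pi$ crossing probability become unnecessary. The paper sums the block inequality $\EE[|\sin D_{j+L}|^r\mid\FF_j]\le q_0|\sin D_j|^r+C\delta^r$ (block length $L$) directly up to $T$. It telescopes to $(1-q_0)\EE\sum_{j<T}|\sin D_j|^r\le L+C\delta^r\EE T$. So the middle region contributes $C(\Delta N+\delta^r)/(\sin h)^r$, however often the gap returns near $0$. The region near $0$ contributes $C[\omega(h)+\omega(\delta)]$, because $W$ is small there and $\Delta N\,\EE T\le C$.
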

\end{samepage}

\pagebreak[0]
\begin{samepage}
\begin{remark}[The two errors in the Laplace argument]
To see what the estimates below must accomplish, fix $s>0$ and
put $\rho=(1+s\Delta N)^{-1}$. Equation~\eqref{eq:drift} gives
$\EE[W_{j+1}\mid\FF_j]=W_j+\pi\Delta N$, so we can compute
the conditional mean increment of $\rho^j(1+(s/\pi)W_j)$.
Summing these increments up to $T$ and separating the exit error
$W_T-\pi$ gives
\begin{equation}\label{eq:laplace-errors}
 \begin{aligned}
 &\left|(1+s)\EE\rho^T-1-\frac{s}{\pi}W_0\right|\\
 &\qquad\leq\frac{s}{\pi}\EE|W_T-\pi|
       +\frac{s^2}{\pi}\Delta N\,\EE\sum_{j<T}|W_j|.
 \end{aligned}
\end{equation}
The stopping is justified once the mean bound of
Lemma~\ref{lem:mean} is available; the calculation is given in
Section~\ref{sec:laplace-proof}. Thus the Laplace argument has
two errors to remove: the error in the value of $W$ at exit,
and the accumulated size of $W$ before exit. The mean bound also
lets us replace $\rho^T$ by $e^{-s\Delta N T}$.

The first error is controlled by the overshoot and continuity.
For the second, Lyapunov contraction controls time spent in the
middle of the phase interval, while the drift of $W$ controls
the time remaining near $\pi$. These estimates are uniform in
the starting state and require no quantitative comparison
between $\delta$ and $\Delta N$.
\end{remark}
\end{samepage}

\subsection{Positive Lyapunov exponent and contraction}
\label{sec:contraction}
The main probabilistic input is positivity of the Lyapunov
exponent. It gives contraction of two projective directions
under the same random transfer product. For nearby energies,
this will bound the time the lifted gap spends in the middle
of $[0,\pi]$, and hence its contribution to the accumulated
error in \eqref{eq:laplace-errors}.

We first work at a single energy. On the projective line, use
the distance
\begin{equation}\label{eq:projective-distance}
 d([u],[v])=\frac{|\det(u,v)|}{\norm{u}\norm{v}}
            =|\sin(\theta-\phi)|.
\end{equation}
Here $u,v\ne0$ represent the two projective points, and
$\theta,\phi$ are their angles modulo $\pi$.
This is exactly Le Page's distance
\cite[Section~II, equation~(6)]{LePage1989}: in dimension two,
the exterior-product norm is $|\det(u,v)|$.
This distance induces the usual topology on $\RP$ and is equivalent
to angular distance modulo $\pi$.
For unit representatives $x,y$ and $A\in\mathrm{SL}(2,\mathbb R)$,
the identity $\det A=1$ gives
\[
 \det(Ax,Ay)=\det A\,\det(x,y)=\det(x,y).
\]
Using this in the definition of $d$, and observing that
$d(x,y)=|\det(x,y)|$ for unit $x,y$, we obtain
\begin{equation}\label{eq:sl2distance}
 d(Ax,Ay)=\frac{d(x,y)}{\norm{Ax}\norm{Ay}}.
\end{equation}
Thus growth of the two vector norms leads to contraction of
their projective separation. In expressions such as $d(Ax,Ay)$,
we suppress the brackets denoting projective classes.

Furstenberg's formula identifies the stationary mean logarithmic
growth with the top Lyapunov exponent. For the forward and
inverse laws here, this exponent is positive by the same theorem
used in Corollary~\ref{prop:furstenberg-used}
\cite{Furstenberg1963}:
\begin{equation}\label{eq:stationary-log-positive}
 \int_{\RP}\int_{\mathbb R}
  \log\frac{\norm{A_{E'}(v)^{\pm1}u}}{\norm{u}}
       \,d\mu(v)\,d\nu_{E'}^\pm([u])>0.
\end{equation}
\begin{samepage}
The logarithm is independent of the nonzero representative $u$.
We give the bounded two-dimensional argument for Le Page's
contraction estimate \cite[Section~II, Proposition~1]{LePage1989}.
Positive mean logarithmic growth gives a small negative-moment
bound, which controls the projective distance through
\eqref{eq:sl2distance}.

\begin{lemma}[Finite-block fractional contraction]
\label{lem:contraction}
There exists a compact interval $I\subset\operatorname{int}\Sigma$
with $E$ in its interior for which one can choose an integer
$L\geq1$ and constants $r\in(0,1)$ and $q_0\in(0,1)$ such that,
uniformly over $E_-\in I$ and projective points $x,y$,
\begin{equation}\label{eq:blockcontract}
 \EE d(M_{E_-,L}x,M_{E_-,L}y)^r\leq q_0d(x,y)^r.
\end{equation}
The interval and constants depend only on $\mu$ and $E$.
\end{lemma}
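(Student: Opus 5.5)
The plan is to reduce \eqref{eq:blockcontract} to a uniform negative-moment bound on vector norms, to obtain that bound at the single energy $E$ from the positive Lyapunov exponent \eqref{eq:stationary-log-positive}, and then to extend it to a neighbourhood of $E$ by continuity. By \eqref{eq:sl2distance} applied to $A=M_{E_-,L}\in\mathrm{SL}(2,\mathbb R)$ and unit representatives $x,y$,
\[
 d(M_{E_-,L}x,M_{E_-,L}y)^r
 =\frac{d(x,y)^r}{\norm{M_{E_-,L}x}^r\norm{M_{E_-,L}y}^r}
 \leq\frac{d(x,y)^r}{2}\bigl(\norm{M_{E_-,L}x}^{-2r}+\norm{M_{E_-,L}y}^{-2r}\bigr),
\]
using $ab\leq(a^2+b^2)/2$. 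It therefore suffices to find $L$, $r$ and a neighbourhood $I$ of $E$ such that
\[
 \sup_{t\in I}\ \sup_{\norm{x}=1}\ \EE\norm{M_{t,L}x}^{-2r}\leq q_0<1 .
\]

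\emph{Step 1 (uniform linear growth at $E$).} Let $\gamma_E>0$ be the integral in \eqref{eq:stationary-log-positive} for the forward law $\nu_E^+$; the products $M_{E,n}$ act forward, so $\nu_E^+$ is the relevant measure. Set $\varphi(x)=\int\log\norm{A_E(v)x}\,d\mu(v)$ for unit $x$, which is continuous on $\RP$. The cocycle property gives
\[
 a_n(x):=\EE\log\norm{M_{E,n}x}=\sum_{k<n}\EE\,\varphi\bigl([M_{E,k}x]\bigr).
\]
I claim $\inf_x a_n(x)/n\to\gamma_E$, and I expect this to be the main obstacle, since it must hold uniformly in the starting direction. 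I would argue by contradiction. Suppose there are $\varepsilon>0$, $n_i\to\infty$ and $x_i$ with $a_{n_i}(x_i)\leq n_i(\gamma_E-\varepsilon)$. Consider the Ces\`aro measures $\eta_i=n_i^{-1}\sum_{k<n_i}\mathrm{Law}[M_{E,k}x_i]$. Each $\eta_i$ differs from its forward push-forward $\int(T_{E,v})_*\eta_i\,d\mu(v)$ by $O(1/n_i)$ in total variation. Hence every weak limit is forward stationary, and by uniqueness in Corollary~\ref{prop:furstenberg-used} it equals $\nu_E^+$. Continuity of $\varphi$ then gives $\int\varphi\,d\eta_i\to\gamma_E$, a contradiction. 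Consequently we can fix $L$ with $\inf_x a_L(x)\geq\gamma_EL/2$.

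\emph{Step 2 (small moment).} Since $|v|\leq M$, we have $|\log\norm{M_{E,L}x}|\leq C_L$ deterministically. Write $e^{-2rY}\leq1-2rY+2r^2Y^2e^{2r|Y|}$ with $Y=\log\norm{M_{E,L}x}$. This gives
\[
 \EE\norm{M_{E,L}x}^{-2r}\leq1-r\gamma_EL+2r^2C_L^2e^{2rC_L}.
\]
Choosing $r\in(0,1)$ small makes this at most $1-r\gamma_EL/2<1$, uniformly in $x$.

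\emph{Step 3 (energy neighbourhood).} For fixed $L$ and $r$, the map $(t,x)\mapsto\EE\norm{M_{t,L}x}^{-2r}=\int\cdots\int\norm{A_t(v_{L-1})\cdots A_t(v_0)x}^{-2r}\,d\mu^{\otimes L}$ is jointly continuous on $\mathbb R\times\RP$. This follows because the integrand is continuous and uniformly bounded for $v\in[-M,M]$ and $t$ in a compact set. Since $\RP$ is compact, the supremum over $x$ is continuous in $t$. Hence there is a compact interval $I\subset\operatorname{int}\Sigma$ containing $E$ in its interior on which the supremum is at most $q_0:=1-r\gamma_EL/4<1$. Combining this with the reduction above yields \eqref{eq:blockcontract}, and the constants depend only on $\mu$ and $E$.
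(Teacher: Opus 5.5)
Your proof is correct and follows essentially the paper's route. Both arguments use the Ces\`aro/uniqueness contradiction to get uniform positive growth over a block of length $L$, a Taylor expansion at $r=0$ for the negative moment, continuity in energy, and the determinant identity \eqref{eq:sl2distance}. The differences are cosmetic: you use $ab\leq(a^2+b^2)/2$ where the paper uses Cauchy--Schwarz, and you extend the moment bound from $E$ to $I$ by continuity, whereas the paper first extends the logarithmic-growth bound to $I$ and then does the Taylor step uniformly in $t\in I$.
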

\end{samepage}

\begin{proof}
At energy $E$, uniqueness of the forward stationary law and the
positive average \eqref{eq:stationary-log-positive} give a block
length $L$ with positive expected logarithmic growth uniformly
over starting directions. Indeed, otherwise Ces\`aro averages
of the direction laws from suitable starting points would have
a stationary limit with nonpositive average growth.
Continuity of the fixed-length product then gives a compact
interval $I\subset\operatorname{int}\Sigma$, with $E$ in its
interior, and $a>0$ such that
\[
 \inf_{\substack{t\in I\\\norm{u}=1}}
       \EE\log\norm{M_{t,L}u}\geq a.
\]

For this fixed $L$, $\log\norm{M_{t,L}u}$ is uniformly bounded
for $t\in I$ and unit $u$. Taylor's formula at $r=0$ therefore gives
\[
 \sup_{\substack{t\in I\\\norm{u}=1}}
       \EE\norm{M_{t,L}u}^{-2r}
       \leq1-2ar+Cr^2=:q_0.
\]
Choose $r\in(0,1)$ small enough that $q_0\in(0,1)$.
For unit representatives $x,y$, the determinant identity
\eqref{eq:sl2distance} and Cauchy--Schwarz now give
\begin{align*}
 \EE d(M_{t,L}x,M_{t,L}y)^r
 &\leq d(x,y)^r
       \bigl[\EE\norm{M_{t,L}x}^{-2r}
             \EE\norm{M_{t,L}y}^{-2r}\bigr]^{1/2}\\
 &\leq q_0d(x,y)^r.\qedhere
\end{align*}
\end{proof}

Fix $L,r,q_0$ and the interval $I$ supplied by
Lemma~\ref{lem:contraction}. We only need qualitative continuity,
so we use one common modulus throughout the estimates below:
\[
 \omega(h)=\omega_{\mathrm{en}}(h)+\omega_{\mathrm{ph}}(h)
 \longrightarrow0\qquad(h\downarrow0),
\]
with the moduli of Lemma~\ref{lem:continuity} taken on $I$.
This controls both energy and phase variation. All subsequent
limits are uniform over $E_-<E_+$ in $I$.

Using the angular form \eqref{eq:projective-distance} of the
projective distance, we now turn Lyapunov contraction into an
occupation estimate for the two-energy process. This will control the
middle of the phase interval in \eqref{eq:laplace-errors}.
We first perturb the contraction estimate to allow two nearby
energies, and then sum it up to a stopping time. The sum
cancels except for the first and last $L$ terms, giving the
bound below.

\begin{samepage}
\begin{lemma}[A stopped occupation bound]\label{lem:occupation}
With $L,r,q_0,I$ as in Lemma~\ref{lem:contraction}, for every
deterministic initial state and bounded stopping time $\tau$,
\begin{equation}\label{eq:stopped}
 (1-q_0)\EE\sum_{j<\tau}|\sin D_j|^r
       \leq L+C\delta^r\EE\tau.
\end{equation}
The constant $C$ may depend on the potential law $\mu$ and the
fixed reference energy $E$. It is uniform over $E_-<E_+$ in $I$,
the initial state and the stopping time.
\end{lemma}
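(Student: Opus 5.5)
Define the Lyapunov-type potential $V_j=|\sin D_j|^r=d(\theta_j,\phi_j)^r$ and compare $V_{j+L}$ with $V_j$ conditionally on $\FF_j$. The aim is a one-block supermartingale-type inequality
\[
 \EE[V_{j+L}\mid\FF_j]\leq q_0V_j+C\delta^r,
\]
which I then sum over $j<\tau$ and telescope.

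\emph{Step 1: perturbation to two energies.} Given $\FF_j$, the phases after $L$ steps are $\theta_{j+L}=M^{(j)}_{E_-,L}\theta_j$ and $\phi_{j+L}=M^{(j)}_{E_+,L}\phi_j$, using the next $L$ fresh potentials, which are independent of $\FF_j$. Insert the intermediate point $\phi'=M^{(j)}_{E_-,L}\phi_j$. By iterating \eqref{eq:energyclose} together with the uniform Lipschitz bounds on $I$, the lifted phases satisfy $|\phi_{j+L}-\phi'|\leq C_L\delta$. Since $d$ is the metric $|\sin(\cdot-\cdot)|$, the triangle inequality holds. Subadditivity of $x\mapsto x^r$ for $r\in(0,1)$ then gives
\[
 V_{j+L}\leq d(M_{E_-,L}\theta_j,M_{E_-,L}\phi_j)^r+(C_L\delta)^r.
\]
Taking conditional expectations and applying Lemma~\ref{lem:contraction} at the energy $E_-\in I$ yields the displayed inequality with $C=C_L^r$.

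\emph{Step 2: summing up to the stopping time.} Rewrite the inequality as $V_j\leq\EE[V_j-V_{j+L}\mid\FF_j]/(1-q_0)+C\delta^r/(1-q_0)$. Because the inequality holds conditionally at every deterministic $j$, it also holds on the event $\{j<\tau\}\in\FF_j$. Since $\tau$ is bounded, the sums are finite, so
\[
 (1-q_0)\EE\sum_{j<\tau}V_j\leq \EE\sum_{j<\tau}(V_j-V_{j+L})+C\delta^r\EE\tau.
\]
The second sum telescopes along the $L$ residue classes modulo $L$:
\[
 \sum_{j<\tau}(V_j-V_{j+L})=\sum_{j<L\wedge\tau}V_j-\sum_{\tau\leq j<\tau+L}V_j\leq L,
\]
because $0\leq V\leq1$. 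This gives \eqref{eq:stopped}.

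\emph{Main obstacle.} The key point is that $\{j<\tau\}$ is $\FF_j$-measurable, so the conditional estimate can be applied to the indicator-weighted terms. This works even though $V_{j+L}$ may be evaluated after $\tau$, since the processes are defined for all times and only the fresh potentials are used. The constant $C_L$ in the perturbation step depends on $L$ and on uniform bounds on $I$ only, which are fixed by Lemma~\ref{lem:contraction}, so $C$ is uniform over all energy pairs, initial states and stopping times.
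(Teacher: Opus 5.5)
Your proposal is correct and follows the paper's proof essentially step by step: the same comparison phase run at $E_-$ from $\phi_j$ with an $O(\delta)$ lifted-phase error after $L$ steps, the same conditional block inequality $\EE[|\sin D_{j+L}|^r\mid\FF_j]\leq q_0|\sin D_j|^r+C\delta^r$, and the same indicator-weighted summation over $j<\tau$ with a pathwise telescoping bound of $L$. One harmless slip: when $\tau<L$ the telescoped sum is $\sum_{j<\tau}V_j-\sum_{L\leq j<L+\tau}V_j$ rather than the expression you wrote, but it is still at most $\tau\leq L$, so the conclusion is unaffected.
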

\end{samepage}

\begin{proof}
To apply Lemma~\ref{lem:contraction}, both phases must evolve at
the same energy. Starting from $(\theta,\phi)$, let $\widehat\phi_k$ evolve at
$E_-$ from $\widehat\phi_0=\phi$, with the same potentials as
$\theta_k$ and $\phi_k$. Recall that $|v_j|\leq M$.
By \eqref{eq:derivatives},
$\partial_\theta F_t(v,\theta)\leq\norm{A_t(v)^{-1}}^2
\leq(t-v)^2+2$, so for $t\in I$ and $|v|\leq M$ we may take
the explicit common Lipschitz constant
\[
 B=2+\left(M+\max_{t\in I}|t|\right)^2.
\]
Combining this angle bound with the energy bound
\eqref{eq:energyclose} gives
\[
 |\phi_{k+1}-\widehat\phi_{k+1}|
 \leq B|\phi_k-\widehat\phi_k|+\delta.
\]
The initial difference is zero, so iteration gives
$|\phi_L-\widehat\phi_L|\leq\delta\sum_{k=0}^{L-1}B^k\leq C\delta$.
Since $\bigl||\sin a|^r-|\sin b|^r\bigr|\leq|a-b|^r$,
Lemma~\ref{lem:contraction}, together with the angular identity
\eqref{eq:projective-distance}, yields
\[
 \begin{aligned}
 \EE|\sin D_L|^r
 &\leq\EE|\sin(\theta_L-\widehat\phi_L)|^r
       +C\delta^r\\
 &\leq q_0|\sin(\theta-\phi)|^r+C\delta^r.
 \end{aligned}
\]

This estimate is uniform in the initial phases. Applying it
conditionally at time $j$ therefore gives
\begin{equation}\label{eq:two-energy-block}
 \EE[|\sin D_{j+L}|^r\mid\FF_j]
 \leq q_0|\sin D_j|^r+C\delta^r.
\end{equation}
We now sum this block estimate up to $\tau$.
Multiply by $\1_{\{j<\tau\}}$, sum over $j$ and take expectations to obtain
\[
 \EE\sum_{j<\tau}\bigl(|\sin D_{j+L}|^r-|\sin D_j|^r\bigr)
 \leq-(1-q_0)\EE\sum_{j<\tau}|\sin D_j|^r
       +C\delta^r\EE\tau.
\]
The sum on the left is bounded below by $-L$, because pathwise
\[
 \begin{aligned}
 \sum_{j<\tau}\bigl(|\sin D_{j+L}|^r-|\sin D_j|^r\bigr)
 &=\sum_{j=\tau}^{\tau+L-1}|\sin D_j|^r
       -\sum_{j=0}^{L-1}|\sin D_j|^r\\
 &\geq-L.
 \end{aligned}
\]
This proves \eqref{eq:stopped}.
\end{proof}

Estimate~\eqref{eq:stopped} controls the accumulated projective
separation in expectation up to a stopping time, without requiring the
initial phases to be close. At spatial block boundaries, we use
the corresponding bound at each deterministic time to confine the
phase gap near multiples of $\pi$.

\begin{samepage}
\begin{lemma}[Unstopped pair contraction]\label{lem:unstopped}
Uniformly over initial states and the fixed energy neighborhood,
\begin{equation}\label{eq:unstopped-pair}
 \EE|\sin D_j|^r\leq q_0^{\lfloor j/L\rfloor}+C\delta^r,
 \qquad j\geq0.
\end{equation}
The same bound holds after averaging any initial law independent of the
future potentials.
\end{lemma}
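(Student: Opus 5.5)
The plan is to iterate the two-energy block estimate \eqref{eq:two-energy-block} from the proof of Lemma~\ref{lem:occupation} along deterministic block times. That estimate holds conditionally on $\FF_j$ for every $j$, uniformly in the current phases and over energy pairs in $I$. Write $j=kL+i$ with $0\le i<L$ and $k=\lfloor j/L\rfloor$. Put $a_k=\EE|\sin D_{kL+i}|^r$, using the offset $i$ as the starting time. Taking expectations in \eqref{eq:two-energy-block} at time $kL+i$ gives the linear recursion
\[
 a_{k+1}\le q_0a_k+C\delta^r .
\]

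Iterating this recursion $k$ times gives
\[
 a_k\le q_0^k a_0+C\delta^r\sum_{m<k}q_0^m\le q_0^k+\frac{C}{1-q_0}\,\delta^r.
\]
Here $a_0=\EE|\sin D_i|^r\le1$, because $|\sin|\le1$. Since $q_0$ depends only on $\mu$ and $E$, we absorb $(1-q_0)^{-1}$ into $C$, which yields \eqref{eq:unstopped-pair}. The bound uses nothing about the initial state beyond $a_0\le1$. Hence it is uniform over initial phases and, since the constants come from Lemma~\ref{lem:contraction}, uniform over $E_-<E_+$ in $I$.

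For a random initial law independent of the future potentials, condition on the initial state. Conditionally, the potentials are still iid with law $\mu$, so the deterministic bound applies pointwise. Averaging over the initial law then preserves the bound, since its right side does not depend on the initial state. The same reasoning covers a start at any later deterministic time, where the earlier potentials play the role of the initial law.

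No step is a real obstacle. The only point to check is that \eqref{eq:two-energy-block} is genuinely uniform in the conditioning state, including arbitrary lifts of $D_j$. This holds because $|\sin D|^r$ depends only on $D\bmod\pi$, and the perturbation bound $|\phi_L-\widehat\phi_L|\le C\delta$ used in its proof is deterministic and holds for every starting pair.
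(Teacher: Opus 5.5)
Your proposal is correct and follows the paper's proof: you iterate the conditional block estimate \eqref{eq:two-energy-block} from the residue time $j-L\lfloor j/L\rfloor$, bound the starting term by $1$, and absorb the geometric factor $(1-q_0)^{-1}$ into $C$. The extension to an independent initial law comes, as in the paper, from the bound being uniform in the initial state.
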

\end{samepage}

\begin{proof}
Write $j=kL+s$, with $0\leq s<L$. Iterating
\eqref{eq:two-energy-block} from time $s$ gives
\[
 \EE|\sin D_j|^r
 \leq q_0^k\EE|\sin D_s|^r+\frac{C\delta^r}{1-q_0}
 \leq q_0^k+\frac{C\delta^r}{1-q_0}.
\]
The bound is uniform in the initial state, so it may also be
integrated against an independent initial law.
\end{proof}

\subsection{Mean hitting times and the Laplace error}
\label{sec:mean-occupation}
The drift of $W_j$ gives useful information about $T$ once its
value at exit is controlled. The elementary crossing bound below
makes $W_T$ close to $\pi$. Stopping the drift will then give
the mean hitting time and show that little waiting time remains
on the IDS scale when $D_j$ is near $\pi$. This supplements projective contraction,
which controls the middle of the phase interval but treats gaps
near $0$ and $\pi$ alike.

The monotonicity in \eqref{eq:derivatives}, the shift identity
\eqref{eq:lift} and the energy bound \eqref{eq:energyclose} give,
for every integer $m$,
\[
 \begin{aligned}
 D_j\geq m\pi&\quad\Longrightarrow\quad D_{j+1}\geq m\pi,\\
 D_j<m\pi&\quad\Longrightarrow\quad D_{j+1}<m\pi+\delta.
 \end{aligned}
\]
Thus $D_j\geq0$, the count $K_j$ is nondecreasing, and at every
finite crossing time,
\begin{equation}\label{eq:overshoot}
 m\pi\leq D_{T_m}\leq m\pi+\delta.
\end{equation}
For $\delta<\pi$, each step crosses at most one new level.
These statements hold pathwise; $D_j$ may still move in either
direction within a period.

The Laplace argument requires two bounds on the mean hitting time.
The bound on $\Delta N\,\EE T$ justifies stopping and replacing
the geometric discount by an exponential. The bound near the
upper end controls the remaining contribution to the
accumulated error.

\begin{samepage}
\begin{lemma}[Mean exit times]\label{lem:mean}
For $0\leq D_0<\pi$, the first crossing time $T$ is integrable and
\begin{equation}\label{eq:mean-identity}
 \pi\Delta N\,\EE T=\EE W_T-W_0,
 \qquad
 \bigl|\pi\Delta N\,\EE T-(\pi-W_0)\bigr|
       \leq\omega(\delta).
\end{equation}
In particular, $\Delta N\,\EE T\leq C$ uniformly over all initial
phases. If $D_0\geq\pi-h$ with $0<h<\pi/2$, then
\begin{equation}\label{eq:uppermean}
 \pi\Delta N\,\EE T\leq\omega(h)+2\omega(\delta).
\end{equation}
\end{lemma}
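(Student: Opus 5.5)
The plan is to apply optional stopping to the exact drift of Proposition~\ref{prop:drift} at the bounded stopping times $T\wedge k$, and then let $k\to\infty$. The drift identity gives
\[
 \EE W_{T\wedge k}-W_0=\pi\Delta N\,\EE(T\wedge k).
\]
To pass to the limit we need $W_{T\wedge k}$ bounded above uniformly. Before $T$ we have $0\leq D_j<\pi$, and at $T$ the overshoot bound \eqref{eq:overshoot} gives $D_T\leq\pi+\delta$. Since each $\mathcal H_t(x)-x$ is bounded and $\pi$-periodic, $|W_j-D_j|$ is bounded uniformly by a constant $C$ (from Lemma~\ref{lem:continuity} on the compact $I$). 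Hence $W_{T\wedge k}\leq\pi+\delta+C$. With $\Delta N>0$, this yields
\[
 \EE(T\wedge k)\leq (\pi+\delta+2C)/(\pi\Delta N).
\]
Monotone convergence then gives integrability of $T$, so $T<\infty$ almost surely. Dominated convergence, using the bound $|W_{T\wedge k}|\leq \pi+\delta+C$ together with a lower bound from $D\geq0$, gives $\pi\Delta N\,\EE T=\EE W_T-W_0$.

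Next we estimate $W_T$. Write
\[
 W_T-\pi=\bigl[\mathcal H_{E_-}(\theta_T)-\mathcal H_{E_-}(\phi_T+\pi)\bigr]+\bigl[\mathcal H_{E_-}(\phi_T)-\mathcal H_{E_+}(\phi_T)\bigr],
\]
using $\mathcal H(x+\pi)=\mathcal H(x)+\pi$. The first bracket is bounded in absolute value by $\omega_{\mathrm{ph}}(\delta)$, because $0\leq\theta_T-(\phi_T+\pi)=D_T-\pi\leq\delta$. The second bracket is at most $\omega_{\mathrm{en}}(\delta)$. Together these give $|W_T-\pi|\leq\omega(\delta)$ pathwise, and the second part of \eqref{eq:mean-identity} follows. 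The uniform bound $\Delta N\,\EE T\leq C$ follows from $|W_0|\leq |D_0|+C\leq\pi+C$. More precisely, $|W_0-D_0|\leq\omega(\delta)$ by the same decomposition, but only boundedness is needed here.

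For \eqref{eq:uppermean}, assume $\pi-h\leq D_0<\pi$ and bound $\pi-W_0$ from above. Decompose
\[
 \pi-W_0=\bigl[\mathcal H_{E_-}(\phi_0+\pi)-\mathcal H_{E_-}(\theta_0)\bigr]+\bigl[\mathcal H_{E_+}(\phi_0)-\mathcal H_{E_-}(\phi_0)\bigr].
\]
The first bracket is at most $\omega_{\mathrm{ph}}(h)$, since $0<\phi_0+\pi-\theta_0=\pi-D_0\leq h$. The second bracket is at most $\omega_{\mathrm{en}}(\delta)$. Combining with the previous paragraph gives
\[
 \pi\Delta N\,\EE T\leq \omega(h)+\omega(\delta)+\omega(\delta).
\]
The condition $h<\pi/2$ only ensures that we stay in the first period, so the arguments of $\mathcal H$ remain within the range covered by the modulus.

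The only delicate point is justifying optional stopping before integrability is known. The truncation argument above handles this, because $W$ is bounded on $\{j\leq T\}$ thanks to the pathwise overshoot bound. Everything else is a direct use of the continuity moduli.
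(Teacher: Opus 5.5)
Your proposal is correct and is essentially the paper's proof: optional stopping of the martingale $W_j-\pi\Delta N\,j$ at $T\wedge k$, monotone and then dominated convergence, and the same decompositions of $W_T-\pi$ and $\pi-W_0$ via the shift identity and the moduli $\omega_{\mathrm{ph}},\omega_{\mathrm{en}}$. The one difference is that you bound the stopped values with the cruder $|W_j-D_j|\leq C$ (in fact $C=2\pi$, with no continuity needed), where the paper records the sharper $-\omega(\delta)\leq W_j\leq\pi+\omega(\delta)$ before $T$. Your unused aside $|W_0-D_0|\leq\omega(\delta)$ is false in general, since $\mathcal H_{E_-}$ need not be the identity. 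The same decomposition only gives $\bigl|W_0-[\mathcal H_{E_-}(\theta_0)-\mathcal H_{E_-}(\phi_0)]\bigr|\leq\omega(\delta)$.
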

\end{samepage}

\begin{proof}
Before $T$, the phase $\theta_j$ lies between $\phi_j$ and
$\phi_j+\pi$. At $T<\infty$, \eqref{eq:overshoot} places
$\theta_T-\pi$ between $\phi_T$ and $\phi_T+\delta$.
Monotonicity, the shift identity \eqref{eq:phase-function} and
the continuity moduli of $\mathcal H_t$ therefore give
\begin{equation}\label{eq:Wbounds}
 \begin{aligned}
 -\omega(\delta)\leq W_j&\leq\pi+\omega(\delta)
       &&(j<T),\\
 |W_T-\pi|&\leq\omega(\delta)
       &&(T<\infty).
 \end{aligned}
\end{equation}
The drift identity makes $W_j-\pi\Delta N\,j$ a martingale.
Stopping at $T\wedge m$ gives
\[
 \pi\Delta N\,\EE(T\wedge m)=\EE W_{T\wedge m}-W_0.
\]
By \eqref{eq:Wbounds}, the right side is bounded uniformly
in $m$. Since $\Delta N>0$, monotone
convergence gives $\EE T<\infty$. The stopped variables
$W_{T\wedge m}$ are uniformly bounded, so dominated convergence gives
\eqref{eq:mean-identity} and its error estimate.

Finally, when $D_0\geq\pi-h$, we have
$0\leq\phi_0+\pi-\theta_0\leq h$. By \eqref{eq:phase-function},
\[
 \begin{aligned}
 \pi-W_0
 &=\mathcal H_{E_-}(\phi_0+\pi)-\mathcal H_{E_-}(\theta_0)\\
 &\quad+\mathcal H_{E_+}(\phi_0)-\mathcal H_{E_-}(\phi_0)\\
 &\leq\omega(h)+\omega(\delta).
 \end{aligned}
\]
This proves \eqref{eq:uppermean}.
\end{proof}

Controlling the accumulated error in \eqref{eq:laplace-errors}
requires both estimates: Lyapunov contraction controls the middle
of the interval, and the drift of $W$ controls the time spent near
the upper end.

\begin{samepage}
\begin{lemma}[Occupation before the first crossing]\label{lem:killed}
Uniformly in $\theta_0$ and $0\leq D_0<\pi$,
\begin{equation}\label{eq:gapoccupation}
 \pi \Delta N\EE\sum_{j<T}|W_j|\longrightarrow0
 \quad(\delta\downarrow0).
\end{equation}
\end{lemma}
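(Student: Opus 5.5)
The approach is to split the sum $\sum_{j<T}|W_j|$ according to where the gap $D_j\in[0,\pi)$ lies. Fix a small $h\in(0,\pi/2)$ and write each index $j<T$ as \emph{bottom} ($D_j\leq h$), \emph{middle} ($h<D_j<\pi-h$) or \emph{top} ($D_j\geq\pi-h$). On the bottom and middle regions the bound on $|W_j|$ and the contraction estimate do the work. On the top region, where $W_j\approx\pi$ is not small, the mean-exit bound \eqref{eq:uppermean} does the work. After $\delta\downarrow0$ we send $h\downarrow0$.

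\emph{Bottom.} If $0\leq D_j\leq h$, then $\phi_j\leq\theta_j\leq\phi_j+h$. Monotonicity of $\mathcal H$ and the moduli of Lemma~\ref{lem:continuity} give
\[
 |W_j|\leq|\mathcal H_{E_-}(\theta_j)-\mathcal H_{E_-}(\phi_j)|
 +|\mathcal H_{E_-}(\phi_j)-\mathcal H_{E_+}(\phi_j)|\leq\omega(h)+\omega(\delta).
\]
Hence the bottom contribution is at most $\pi\Delta N\,\EE T\,(\omega(h)+\omega(\delta))\leq C(\omega(h)+\omega(\delta))$, using the uniform bound $\Delta N\,\EE T\leq C$ from Lemma~\ref{lem:mean}.

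\emph{Middle.} By \eqref{eq:Wbounds}, $|W_j|\leq\pi+\omega(\delta)\leq C$ for $j<T$. In the middle region $|\sin D_j|^r\geq(\sin h)^r$. Apply Lemma~\ref{lem:occupation} with the bounded stopping time $\tau=T\wedge m$ and let $m\to\infty$ by monotone convergence. This gives
\[
 \pi\Delta N\,\EE\#\{j<T:\ h<D_j<\pi-h\}
 \leq\frac{C\Delta N\,(L+C\delta^r\EE T)}{(1-q_0)(\sin h)^r}
 \leq\frac{C(L\Delta N+\delta^r)}{(\sin h)^r}.
\]
This tends to $0$ as $\delta\downarrow0$ for fixed $h$, because $\Delta N=N(E_+)-N(E_-)\to0$ uniformly on $I$ by continuity of the IDS.

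\emph{Top.} This is the step I expect to be the main obstacle, because the gap may enter and leave the top region many times before crossing. Let $\sigma=\inf\{j:D_j\geq\pi-h\}$, a stopping time. Every top index $j<T$ satisfies $\sigma\leq j<T$, so the number of top indices before $T$ is at most $(T-\sigma)\1_{\{\sigma<T\}}$. At time $\sigma$ the future potentials are iid and independent of $\FF_\sigma$, and the state $(\theta_\sigma,\phi_\sigma)$ has $\pi-h\leq D_\sigma<\pi$. Lemma~\ref{lem:mean} is uniform in the initial phases, so conditioning on $\FF_\sigma$ and applying \eqref{eq:uppermean} gives
\[
 \pi\Delta N\,\EE[(T-\sigma)\1_{\{\sigma<T\}}]\leq\omega(h)+2\omega(\delta).
\]
Multiplying by the bound $|W_j|\leq C$ shows that the top contribution is at most $C(\omega(h)+\omega(\delta))$.

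Adding the three pieces gives
\[
 \limsup_{\delta\downarrow0}\ \sup\ \pi\Delta N\,\EE\sum_{j<T}|W_j|\leq C\,\omega(h),
\]
where the supremum is over initial states and energy pairs in $I$. Since $h$ is arbitrary and $\omega(h)\to0$, \eqref{eq:gapoccupation} follows.
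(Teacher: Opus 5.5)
Your proposal is correct and follows the paper's proof essentially step for step. It uses the same three-region split near $0$, in the middle and near $\pi$, with the same bounds: $\omega(h)+\omega(\delta)$ together with $\Delta N\,\EE T\leq C$ at the bottom, the stopped occupation bound \eqref{eq:stopped} divided by $(\sin h)^r$ in the middle, and the restart at $\sigma_h$ combined with \eqref{eq:uppermean} at the top, followed by $\delta\downarrow0$ and then $h\downarrow0$.
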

\end{samepage}

\begin{proof}
We split the sum into three regions: near $0$, in the middle, and
near $\pi$. The function $W_j$ is small in the first region; the time
spent in each of the other two is small on the IDS scale.
The constant $C$ below is uniform in the energies, initial state
and $h$.
By Lemma~\ref{lem:mean}, we have
\begin{equation}\label{eq:scaled-mean-budget}
 \pi \Delta N\EE T\leq C.
\end{equation}
Fix $0<h<\pi/4$.

\smallskip\noindent
\emph{1. Near $0$: $W_j$ is small.}
If $D_j\leq h$, then
$|W_j|\leq\omega(h)+\omega(\delta)$.
By \eqref{eq:scaled-mean-budget}, this region contributes at most
$C[\omega(h)+\omega(\delta)]$
to the left side of \eqref{eq:gapoccupation}.

\smallskip\noindent
\emph{2. In the middle: contraction controls occupation.}
Apply \eqref{eq:stopped} to $T\wedge m$ and let $m\to\infty$.
Nonnegativity and monotone convergence give
\begin{equation}\label{eq:fkilled}
 \pi \Delta N\EE\sum_{j<T}|\sin D_j|^r
 \leq\frac{L\pi \Delta N
                +C\delta^r(\pi \Delta N\EE T)}{1-q_0}
 \leq C(\Delta N+\delta^r).
\end{equation}
On $h<D_j<\pi-h$, we have $|\sin D_j|\geq\sin h$,
while \eqref{eq:Wbounds} gives
$|W_j|\leq\pi+\omega(\delta)$ for $j<T$.
Thus the middle region contributes at most
$C(\Delta N+\delta^r)/(\sin h)^r$.
For fixed $h$, this tends to zero. The error $C\delta^r$ in
\eqref{eq:two-energy-block} comes from comparing the phase recursions
at $E_-$ and $E_+$. Up to time $T$, this error accumulates to at
most $C\delta^r\EE T$. After multiplication by $\Delta N$,
\eqref{eq:scaled-mean-budget} gives
$C\delta^r\Delta N\,\EE T=O(\delta^r)$.
Therefore we do not need $\delta^r/\Delta N\to0$:
continuity of $N$ suffices here, without a quantitative
H\"older bound on its increments.

\smallskip\noindent
\emph{3. Near $\pi$: use the mean exit-time bound.}
Let
\[
 \sigma_h=\inf\{j\geq0:D_j\geq\pi-h\}.
\]
We have $\sigma_h\leq T$. On $\{\sigma_h<T\}$, the gap at
$\sigma_h$ lies in $[\pi-h,\pi)$, and the unused potentials
are iid and independent of $\FF_{\sigma_h}$. We can therefore
apply \eqref{eq:uppermean} to the phase pair restarted at $\sigma_h$.
On $\{\sigma_h=T\}$ the remaining time is zero, so in both cases
\begin{equation}\label{eq:remaining}
 \pi\Delta N\EE[T-\sigma_h\mid\FF_{\sigma_h}]
 \leq\omega(h)+2\omega(\delta).
\end{equation}
Every visit to $[\pi-h,\pi)$ before $T$ occurs between
$\sigma_h$ and $T$. Since \eqref{eq:Wbounds} bounds $|W_j|$
uniformly before $T$, taking expectations in \eqref{eq:remaining} gives
\[
 \begin{aligned}
 \pi\Delta N\EE\sum_{j<T}|W_j|
                           \1_{\{D_j\geq\pi-h\}}
 &\leq C\pi\Delta N\EE(T-\sigma_h)\\
 &\leq C[\omega(h)+\omega(\delta)].
 \end{aligned}
\]

\smallskip\noindent
\emph{Combining the three regions,} we obtain
\begin{equation}\label{eq:killed-error-ledger}
 \pi \Delta N\EE\sum_{j<T}|W_j|
 \leq C\Bigl[\omega(h)
       +\frac{\Delta N+\delta^r}{(\sin h)^r}
       +\omega(\delta)\Bigr].
\end{equation}
The constant $C$ is independent of $h$. First let $\delta\to0$
at fixed $h$, then let $h\to0$. This proves the uniform limit.
\end{proof}

\subsection{The Laplace transform and the exponential law}
\label{sec:laplace-proof}
\begin{proof}[Proof of Theorem~\ref{thm:exit}]
We prove the Laplace profile \eqref{eq:affineprofile} first.
Fix $s>0$. We let $\delta=E_+-E_-\downarrow0$, allowing
$E_\pm\in I$ and the initial phases to vary. Continuity of $N$
on the compact interval $I$ gives $\Delta N\to0$ uniformly.
For each energy pair, set $\rho=(1+s\Delta N)^{-1}$. This choice makes
$1-\rho=s\Delta N\rho$, so discounting the constant part of
$1+(s/\pi)W$ cancels the contribution of the exact drift.
More explicitly, \eqref{eq:drift} gives
\[
 \begin{aligned}
 &\EE\!\left[
   \rho^{j+1}\left(1+\frac{s}{\pi}W_{j+1}\right)
   -\rho^j\left(1+\frac{s}{\pi}W_j\right)\,
   \middle|\,\FF_j\right]\\
 &\qquad=\rho^j\left[
       (\rho-1)\left(1+\frac{s}{\pi}W_j\right)
                         +\rho s\Delta N\right]\\
 &\qquad=-\frac{s^2}{\pi}\Delta N\,\rho^{j+1}W_j.
 \end{aligned}
\]
With the energy pair fixed, stop at $T\wedge m$ and let
$m\to\infty$. Lemma~\ref{lem:mean}
and the uniform bound on the stopped values of $W$ justify dominated
convergence, giving
\begin{equation}\label{eq:geometric-exit-identity}
 \EE\!\left[\rho^T\left(1+\frac{s}{\pi}W_T\right)\right]
       -1-\frac{s}{\pi}W_0
 =-\frac{s^2}{\pi}\Delta N\,
       \EE\sum_{j<T}\rho^{j+1}W_j.
\end{equation}
\[
 \begin{aligned}
 &\left|(1+s)\EE\rho^T-1-\frac{s}{\pi}W_0\right|\\
 &\quad\overset{\eqref{eq:geometric-exit-identity}}{\leq}
       \frac{s}{\pi}\EE|W_T-\pi|
       +\frac{s^2}{\pi}\Delta N\,\EE\sum_{j<T}|W_j|\\
 &\quad\overset{\eqref{eq:Wbounds}}{\leq}
       \frac{s}{\pi}\omega(\delta)
       +\frac{s^2}{\pi}\Delta N\,\EE\sum_{j<T}|W_j|\\
 &\quad\xrightarrow[\eqref{eq:gapoccupation}]{\delta\downarrow0}0.
 \end{aligned}
\]
To replace the geometric discount by an exponential, use
$\log(1+s\Delta N)=s\Delta N+O((\Delta N)^2)$ and
$|e^{-u}-e^{-v}|\leq|u-v|$ for $u,v\geq0$. Then
\[
 \begin{aligned}
 \EE\bigl|\rho^T-e^{-s\Delta N T}\bigr|
 &=\EE\bigl|(1+s\Delta N)^{-T}-e^{-s\Delta N T}\bigr|\\
 &\leq C(\Delta N)^2\EE T
 \leq C\Delta N\xrightarrow{\delta\downarrow0}0,
 \end{aligned}
\]
by the mean bound $\Delta N\,\EE T\leq C$.
This proves \eqref{eq:affineprofile}.

For the exponential law, assume $0\leq D_0\leq h_\delta$, where
$h_\delta\to0$ as $\delta\downarrow0$. Then
\[
 |W_0|\leq\omega(h_\delta)+\omega(\delta)
 \xrightarrow{\delta\downarrow0}0.
\]
Together with \eqref{eq:affineprofile}, this gives
\[
 \EE e^{-s\Delta N T}
 \xrightarrow{\delta\downarrow0}\frac1{1+s},
\]
the Laplace transform of $\operatorname{Exp}(1)$.
The error bounds hold for every allowed energy pair and initial
state. Applying the Laplace continuity theorem
\cite[Chapter~XIII, Section~1]{Feller1971} along any such sequence
with $\delta\downarrow0$ proves the uniform convergence in
\eqref{eq:exp}.
\end{proof}

\section{Renewal counts and phase gaps modulo \texorpdfstring{$\pi$}{pi}}\label{sec:renewal}
We use Theorem~\ref{thm:exit} to obtain independent exponential
limits for successive waiting times and Poisson limits for crossing
counts. We then control
the phase gaps modulo $\pi$ and terminal phases needed to recover
eigenvalue counts in Section~\ref{sec:pointprocess}.
Throughout, $I$ is the compact energy interval fixed in
Section~\ref{sec:exit}.

\subsection{Restarts and Poisson counts}
For $E_-<E_+$ in $I$, write $\Delta N=N(E_+)-N(E_-)$ and
take $\delta=E_+-E_-<\pi$, so each step crosses at most one new level.
Write
\[
 R_j=D_j-\pi K_j\in[0,\pi)
\]
for the phase gap modulo $\pi$, represented in $[0,\pi)$.

\begin{samepage}
\begin{corollary}[Successive crossing times]
\label{cor:restart}
Let $E_-<E_+$ lie in $I$.
As $\delta\to0$, start the coupled phases of \eqref{eq:pair}
with $0\leq D_0\leq h_\delta\to0$.
Recall that the $m$th crossing time is
\[
 T_m=\inf\{j\geq0:D_j\geq m\pi\},\qquad
 D_j=\theta_j-\phi_j,\qquad m\geq1,
\]
and set $T_0=0$.
Every $T_m$ is finite almost surely. For each fixed $M$,
\[
 \Delta N\bigl(T_1-T_0,\ldots,T_M-T_{M-1}\bigr)
            \Longrightarrow(X_1,\ldots,X_M),
\]
where the $X_i$ are independent $\operatorname{Exp}(1)$ variables.
The convergence is uniform over the allowed energy pairs, $\theta_0$,
and the allowed initial gap.
\end{corollary}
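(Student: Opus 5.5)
The plan is to proceed by induction on $M$, using the strong Markov property of the coupled phase chain at each crossing time together with the uniformity in Theorem~\ref{thm:exit}.

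\emph{Finiteness.} First, every $T_m$ is finite almost surely. Lemma~\ref{lem:mean} gives $\EE T<\infty$ for every initial state with $0\leq D_0<\pi$. After $T_{m-1}$, the pathwise shift invariance applies: by \eqref{eq:lift} the recursions commute with $\theta\mapsto\theta-\pi$. So the process $(\theta_{T_{m-1}+k}-(m-1)\pi,\ \phi_{T_{m-1}+k})_{k\geq0}$ is again a coupled phase pair driven by the unused potentials $v_{T_{m-1}},v_{T_{m-1}+1},\ldots$. These are iid and independent of $\FF_{T_{m-1}}$. Its initial gap is $R_{T_{m-1}}=D_{T_{m-1}}-(m-1)\pi\in[0,\delta]$ by \eqref{eq:overshoot}. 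Hence $T_m-T_{m-1}$ is the first crossing time $T$ for this restarted pair, and it is finite almost surely by induction.

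\emph{Joint law.} I would use joint Laplace transforms. Fix $s_1,\ldots,s_M>0$ and condition on $\FF_{T_{M-1}}$. By the strong Markov property,
\[
 \EE\Bigl[e^{-s_M\Delta N(T_M-T_{M-1})}\Bigm|\FF_{T_{M-1}}\Bigr]=g(\theta_{T_{M-1}},\phi_{T_{M-1}}),
\]
where $g(\theta_0,\phi_0)=\EE e^{-s_M\Delta N T}$ for the pair started at $(\theta_0-(M-1)\pi,\phi_0)$. The restarted gap lies in $[0,\delta]$, so I take $h'_\delta=\max(h_\delta,\delta)\to0$. Theorem~\ref{thm:exit}, through \eqref{eq:affineprofile} together with $|W_0|\leq\omega(h'_\delta)+\omega(\delta)$, then gives
\[
 \sup|g-(1+s_M)^{-1}|\to0
\]
uniformly over the allowed energy pairs and entering phases. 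The other factors $e^{-s_i\Delta N(T_i-T_{i-1})}$ with $i<M$ are $\FF_{T_{M-1}}$-measurable and bounded by one. Therefore
\[
 \Bigl|\EE\prod_{i\leq M}e^{-s_i\Delta N(T_i-T_{i-1})}-(1+s_M)^{-1}\EE\prod_{i<M}e^{-s_i\Delta N(T_i-T_{i-1})}\Bigr|\leq\sup|g-(1+s_M)^{-1}|.
\]
Inducing on $M$, with base case Theorem~\ref{thm:exit} itself, gives convergence of the joint Laplace transform to $\prod_i(1+s_i)^{-1}$, uniformly in the data. The multivariate Laplace continuity theorem for nonnegative random vectors, applied along any sequence of allowed data with $\delta\downarrow0$, then gives the stated convergence. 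Uniformity follows by the usual sequential argument.

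\emph{Main obstacle.} The delicate step is the restart. Two facts carry it. First, the $\pi$-shifted pair must again be a coupled pair of the form \eqref{eq:pair}; this uses $F(v,\theta+\pi)=F(v,\theta)+\pi$ at both energies. Second, $W$ shifts consistently: by \eqref{eq:phase-function}, $\mathcal H_{E_-}(\theta-(M-1)\pi)=\mathcal H_{E_-}(\theta)-(M-1)\pi$, so the restarted $W_0$ is small when the residual gap is at most $\delta$. Uniformity of Theorem~\ref{thm:exit} in the starting phase is exactly what makes the random entering phase at $T_{M-1}$ harmless. The only remaining check is that conditioning on the stopping-time $\sigma$-field is legitimate, which holds because the $v_j$ are iid and $T_{M-1}$ is an $(\FF_j)$-stopping time.
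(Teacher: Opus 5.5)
Your proposal is correct and follows essentially the same route as the paper. You restart at $T_{m-1}$ by a $\pi(m-1)$ shift of one lift, with residual gap at most $\max(h_\delta,\delta)$, and get finiteness from the uniform mean bound of Lemma~\ref{lem:mean}. You then take the uniform conditional Laplace limit from Theorem~\ref{thm:exit}, condition backwards through the crossing times, and conclude with joint Laplace-transform convergence.
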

\end{samepage}

\begin{proof}
Whenever $T_{m-1}<\infty$, add $\pi(m-1)$ to the second lift.
By \eqref{eq:lift}, this shift preserves the recursion \eqref{eq:pair}, while
the gap becomes $R_{T_{m-1}}$. Its next crossing time is therefore
$T_m-T_{m-1}$. The unused potentials are iid and independent of
$\FF_{T_{m-1}}$.

Initially $R_{T_0}=D_0\leq h_\delta$; at every later crossing,
the overshoot bound \eqref{eq:overshoot} gives
$0\leq R_{T_{m-1}}\leq\delta$.
The uniform mean bound of Lemma~\ref{lem:mean} therefore proves
inductively that all $T_m$ are integrable and finite. Theorem~\ref{thm:exit}
then gives, for each fixed $s>0$,
\begin{equation}\label{eq:conditional-restart-laplace}
 \EE\!\left[e^{-s\Delta N(T_m-T_{m-1})}
              \mid\FF_{T_{m-1}}\right]=\frac1{1+s}+o(1)
 \qquad(\delta\downarrow0).
\end{equation}
The absolute error has a deterministic bound tending to zero,
uniformly over the entering phase, the allowed energies and the
crossing number.

Fix $q_1,\ldots,q_M\geq0$ and condition successively at
$T_{M-1},\ldots,T_0$. All exponential factors are bounded by one,
so the finitely many uniform errors in
\eqref{eq:conditional-restart-laplace} tend to zero. Hence
\[
 \EE\exp\!\left[-\Delta N
                         \sum_{m=1}^Mq_m(T_m-T_{m-1})\right]
             \longrightarrow\prod_{m=1}^M(1+q_m)^{-1}.
\]
Joint Laplace-transform convergence gives the asserted independent
exponential limit.
\end{proof}

To pass from waiting times to counts, we use the fact that the
count has reached $k$ by a given time precisely when the $k$th
crossing has occurred. The limiting arrival times are sums of
the exponentials in Corollary~\ref{cor:restart}.
For $k\geq1$, let $\Gamma_k$ denote the distribution function
of the sum of $k$ independent $\operatorname{Exp}(1)$ variables.
These sums have the standard unit-rate Gamma laws
\cite[Theorem~7.2 and Exercise~7.3]{LastPenrose2018}, whose distribution
functions are
\begin{equation}\label{eq:gamma-arrival-cdf}
 \Gamma_k(t)=1-e^{-t}\sum_{j=0}^{k-1}\frac{t^j}{j!},\qquad t\geq0.
\end{equation}
In particular,
$\Gamma_1(t)=1-e^{-t}$ and $\Gamma_2(t)=1-e^{-t}(1+t)$.
The elementary bounds needed later are
\begin{equation}\label{eq:gamma-small-time}
 0\leq t-\Gamma_1(t)\leq\frac{t^2}{2},\qquad
 0\leq\Gamma_2(t)\leq\frac{t^2}{2}\qquad(t\geq0).
\end{equation}

To obtain local spectral statistics, we now let the energy window
vary with the volume. The arrival-time limits give a Poisson law
for the count on spatial intervals of length of order $1/\Delta N$.

\begin{samepage}
\begin{corollary}[Poisson counts for one energy window]\label{cor:counts}
Let $E_{-,n}<E_{+,n}$ lie in $I$, with
$\delta_n=E_{+,n}-E_{-,n}\to0$, and write
$\Delta N_n=N(E_{+,n})-N(E_{-,n})$. Start the coupled phases with
$0\leq D_0\leq h_n\to0$. If the deterministic integers $\ell_n\geq0$
satisfy $\Delta N_n\ell_n\to t\geq0$, then
\[
 K_{\ell_n}\ \Longrightarrow\ \operatorname{Poisson}(t).
\]
More precisely, for every fixed $k\geq1$,
\begin{equation}\label{eq:uniform-arrival-kernel}
 \sup_{\theta_0\in\mathbb R,\,0\leq D_0\leq h_n}
 \left|\PP(T_k\leq\ell_n)-\Gamma_k(t)\right|
 \xrightarrow[n\to\infty]{\delta_n\to0}0.
\end{equation}
In particular, if $n\Delta N_n\to w>0$, then
$K_{\lfloor ns\rfloor}\Longrightarrow\operatorname{Poisson}(ws)$
for each fixed $s\geq0$.
\end{corollary}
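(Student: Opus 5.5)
The plan is to deduce the statement from Corollary~\ref{cor:restart} by passing from arrival times to counts. Since $K_j$ is nondecreasing and each step crosses at most one new level (as $\delta_n<\pi$ eventually), we have the pathwise identity
\[
 \{K_{\ell_n}\geq k\}=\{T_k\leq\ell_n\}.
\]
It therefore suffices to prove \eqref{eq:uniform-arrival-kernel} for each fixed $k\geq1$. The Poisson law of $K_{\ell_n}$ then follows from
\[
 \PP(K_{\ell_n}=k)=\PP(T_k\leq\ell_n)-\PP(T_{k+1}\leq\ell_n)\to\Gamma_k(t)-\Gamma_{k+1}(t)=e^{-t}t^k/k!,
\]
together with $\PP(K_{\ell_n}=0)=1-\PP(T_1\leq\ell_n)\to e^{-t}$, using \eqref{eq:gamma-arrival-cdf}. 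For $t=0$ we use $\Gamma_k(0)=0$ and check directly that $\PP(T_1\leq\ell_n)\to0$.

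To prove \eqref{eq:uniform-arrival-kernel}, write
\[
 \Delta N_nT_k=\sum_{m\leq k}\Delta N_n(T_m-T_{m-1}).
\]
By Corollary~\ref{cor:restart}, the vector of rescaled waiting times converges jointly to independent exponentials, uniformly over $\theta_0$ and $0\leq D_0\leq h_n$. The continuous mapping theorem then gives $\Delta N_nT_k\Rightarrow\mathrm{Gamma}(k,1)$. Since $\Gamma_k$ is continuous and $\Delta N_n\ell_n\to t$, we get
\[
 \PP(T_k\leq\ell_n)=\PP(\Delta N_nT_k\leq\Delta N_n\ell_n)\to\Gamma_k(t).
\]
This step uses Pólya-type uniform convergence of distribution functions to a continuous limit, combined with continuity of $\Gamma_k$ at $t$. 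When $t=0$, the same argument applies because $\Gamma_k$ is continuous at $0$ with value $0$.

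Uniformity over initial states is the main point needing care. Corollary~\ref{cor:restart} is stated uniformly, but "uniform weak convergence" needs to be made precise. I would argue by contradiction. Suppose there are $\varepsilon>0$ and initial states $(\theta_0^{(n)},D_0^{(n)})$ along a subsequence with
\[
 |\PP(T_k\leq\ell_n)-\Gamma_k(t)|>\varepsilon.
\]
Corollary~\ref{cor:restart} applies along this particular sequence of initial states and energy pairs, since its hypotheses only require $D_0\leq h_n\to0$. It therefore gives convergence in law along this sequence, which contradicts the assumed gap. This subsequence argument is the only delicate step; the remaining steps are routine.

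For the final claim, take $\ell_n=\lfloor ns\rfloor$. Then $\Delta N_n\ell_n\to ws$, because $n\Delta N_n\to w$ and $\Delta N_n\to0$. The general statement then gives $K_{\lfloor ns\rfloor}\Rightarrow\operatorname{Poisson}(ws)$.
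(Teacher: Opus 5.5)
Your proposal is correct and follows the paper's own argument. The paper likewise gets the $\Gamma_k$ limit of $\Delta N_nT_k$ from Corollary~\ref{cor:restart} and uses continuity of $\Gamma_k$ to handle the moving threshold $\Delta N_n\ell_n\to t$. It then converts to counts via $\{K_{\ell_n}\geq k\}=\{T_k\leq\ell_n\}$ and adjacent tails.

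Your continuous-mapping step for the sum of waiting times and your subsequence argument for uniformity only make explicit what the paper leaves implicit. The ``at most one new level per step'' property is not needed for the pathwise identity; monotonicity of $K_j$ suffices.
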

\end{samepage}

Uniformity in $\theta_0$ allows us to apply this result at the
spatial block boundaries in Section~\ref{sec:pointprocess}, where
the starting phase depends on the past.

\begin{proof}
By Corollary~\ref{cor:restart}, the limiting distribution function
of $\Delta N_nT_k$ is $\Gamma_k$. Its continuity
permits the moving threshold $\Delta N_n\ell_n\to t$, and the
uniformity in the initial state gives \eqref{eq:uniform-arrival-kernel}.
Since $\{K_{\ell_n}\geq k\}=\{T_k\leq\ell_n\}$, adjacent tails give
\[
 \PP(K_{\ell_n}=k)\longrightarrow
 \Gamma_k(t)-\Gamma_{k+1}(t)=e^{-t}\frac{t^k}{k!},\qquad k\geq0,
\]
where $\Gamma_0(t)=1$. These limits sum to one, proving the count law.
\end{proof}

To control the phase gap modulo $\pi$, we first estimate the probability of
a crossing in a short future interval.

\begin{samepage}
\begin{lemma}[Crossings in a short future window]\label{lem:short-window-counts}
Let the energy pairs be as in Corollary~\ref{cor:counts}, with
$n\Delta N_n\to w>0$, and fix any sequence $h_n\geq0$ tending
to zero. Then, for every fixed $t\geq0$,
\[
 \lim_{\varepsilon\downarrow0}\limsup_{n\to\infty}
 \PP\bigl(K_{\lfloor n(t+\varepsilon)\rfloor}
                  >K_{\lfloor nt\rfloor}\bigr)=0,
\]
uniformly over $\theta_0\in\mathbb R$ and $0\leq D_0\leq h_n$.
\end{lemma}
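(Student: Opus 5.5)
We condition at the deterministic time $j_0=\lfloor nt\rfloor$ and restart the process there. The event $\{K_{\lfloor n(t+\varepsilon)\rfloor}>K_{j_0}\}$ says that $D$ crosses the next multiple of $\pi$ within the next $\ell=\lfloor n(t+\varepsilon)\rfloor-j_0\approx n\varepsilon$ steps. Shift the second lift by $\pi K_{j_0}$, as in the proof of Corollary~\ref{cor:restart}. The restarted pair has gap $R_{j_0}\in[0,\pi)$, and its future potentials are independent of $\FF_{j_0}$. The conditional probability of a crossing is therefore $\PP(T\leq\ell)$ for the pair started from the phases at time $j_0$. The gap $R_{j_0}$ is not small, so Corollary~\ref{cor:counts} does not apply directly. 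Instead we use the general Laplace profile \eqref{eq:affineprofile}, which holds for every $0\leq D_0<\pi$.

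From \eqref{eq:affineprofile} and Markov's inequality applied to $1-e^{-s\Delta N T}$,
\[
 \PP(T\leq\ell)\leq\frac{1-\EE e^{-s\Delta N T}}{1-e^{-s\Delta N\ell}}
 \leq\frac{1-\frac{1+(s/\pi)W_0}{1+s}+o(1)}{1-e^{-s\Delta N\ell}}
 =\frac{\frac{s}{1+s}\bigl(1-W_0/\pi\bigr)+o(1)}{1-e^{-s\Delta N\ell}}.
\]
Since $\Delta N\ell\to w\varepsilon$, the choice $s=1/(w\varepsilon)$ makes the denominator tend to $1-e^{-1}$. The numerator is at most $C(\pi-W_0)/\pi+s/(1+s)\cdot 0+o(1)$, but $s/(1+s)$ is close to $1$, not small. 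So this bound only gives smallness when $W_0$ is near $\pi$, that is, when $R_{j_0}$ is close to $\pi$. We therefore split according to the size of $R_{j_0}$.

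We take three cases, with a fixed small $h$.
\begin{enumerate}
\item $R_{j_0}\leq h$. A crossing within $\ell$ steps then has probability close to $\Gamma_1(w\varepsilon)\leq w\varepsilon$. This follows from the argument behind Corollary~\ref{cor:counts}, using the uniformity in \eqref{eq:affineprofile} with $|W_0|\leq\omega(h)+\omega(\delta)$. Concretely, the limiting Laplace transform $(1+(s/\pi)W_0)/(1+s)$ lies within $C\omega(h)$ of $1/(1+s)$. Continuity of the exponential law then bounds $\PP(\Delta N T\leq w\varepsilon)$ by $w\varepsilon+C'\omega(h)^{1/2}$, or a similar bound obtained by smoothing.
\item $h<R_{j_0}<\pi-h$. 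By Lemma~\ref{lem:unstopped}, $\PP(|\sin D_{j_0}|\geq\sin h)\leq (q_0^{\lfloor j_0/L\rfloor}+C\delta^r)/(\sin h)^r$. This tends to zero when $t>0$. When $t=0$ we have $R_0=D_0\leq h_n$, so case 1 applies directly.
\item $R_{j_0}\geq\pi-h$. By \eqref{eq:uppermean}, $\pi\Delta N\,\EE T\leq\omega(h)+2\omega(\delta)$. Markov's inequality then gives $\PP(T\leq\ell)\leq 1$, which is useless. This is the real obstacle: a gap near $\pi$ crosses almost immediately, so it does produce a crossing in the window.
\end{enumerate}

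We therefore need $\PP(R_{j_0}\geq\pi-h)$ to be small uniformly in $n$ as $h\to0$. The argument goes through the first crossing after time $j_0$ and runs backward: if $R_{j_0}\geq\pi-h$, a crossing occurs shortly after $j_0$. By \eqref{eq:uppermean} and Markov's inequality, it occurs before $j_0+n\eta$ with probability at least $1-(\omega(h)+2\omega(\delta))/(\pi w\eta)$. Hence
\[
\PP(R_{j_0}\geq\pi-h)\leq 2\,\PP\bigl(K_{j_0+n\eta}>K_{j_0}\bigr)+o(1)
\]
once $h$ is small relative to $\eta$. We bound this last probability by splitting on the case $R_{j_0}\leq\pi-h'$ for a fixed larger $h'$, treated by cases 1 and 2 with window $\eta$, together with the event $R_{j_0}>\pi-h'$. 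This gives a recursive inequality between scales, and we close it by comparing with the total crossing rate.

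A cleaner route is the following. Cases 1 and 2 show that $\PP(K_{j+\ell}>K_j,\,R_j<\pi-h)\leq w\varepsilon+o_h(1)$ for every $j$. Summing over a grid of $1/\varepsilon$ windows covering $[t,t+1]$, and comparing with $\EE(K_{n(t+1)}-K_{nt})\to w$ from Corollary~\ref{cor:counts}, which has uniformly integrable counts because Poisson-type tails follow from the $\Gamma_k$ bounds, forces the near-$\pi$ contributions to be small on average. Stationarity-like uniformity in $t$ then transfers this to the specific window at $nt$. I expect this step, excluding mass of $R_{j_0}$ near $\pi$, to be the main difficulty. If the averaging does not close, I would instead compare $R_{j_0}$ with the non-atomic stationary law: convergence of the gap modulo $\pi$ gives $\PP(R_{j_0}\geq\pi-h)\to0$ as $h\to0$. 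The first thing I would try there is to show that the law of $\mathcal H_{E_-}(\theta_{j_0})-\mathcal H_{E_+}(\phi_{j_0})$ modulo $\pi$ does not concentrate near $\pi^-$, using the exact drift \eqref{eq:drift}.
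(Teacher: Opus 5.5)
There is a genuine gap, and it is where you locate it yourself: case~3. After restarting at $j_0=\lfloor nt\rfloor$, the crossing probability is small only if $\PP(R_{j_0}\geq\pi-h)$ is small uniformly in $n$ as $h\to0$. None of your proposed routes establishes this.

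\begin{itemize}
\item Your inequality $\PP(R_{j_0}\geq\pi-h)\leq 2\PP(K_{j_0+n\eta}>K_{j_0})+o(1)$ leads straight back to a short-window crossing probability, which is the quantity you are trying to bound. In the paper this implication runs the other way: Lemma~\ref{lem:residual-pinning}, which rules out $R_{\lfloor nt\rfloor}$ near $\pi$, is deduced \emph{from} the present lemma. So your ``recursive inequality between scales'' is circular.
\item The averaging route does not close either. Bounding the sum over grid windows by $\EE(K_{n(t+1)}-K_{nt})\approx w$ controls only the average over the grid. A process started from a deterministic state has no stationarity in $t$ that would transfer this average to the particular window at $\lfloor nt\rfloor$. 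Moreover, concluding that the near-$\pi$ terms are small would need matching lower bounds on the other terms, and those again require $R$ to be pinned near $0$.
\item No convergence of the gap modulo $\pi$ to a non-atomic law is available; in fact the gap converges to $0$. The suggested use of the drift of $W$ is not developed.
\end{itemize}

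The missing idea is that you never need the state at time $\lfloor nt\rfloor$. The lemma concerns the process started at time $0$ with $D_0\leq h_n$. For that process, \eqref{eq:uniform-arrival-kernel} already gives $\PP(T_k\leq\ell_n)\to\Gamma_k(t')$ whenever $\Delta N_n\ell_n\to t'$, uniformly in the initial state.

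Set $a=\lfloor nt\rfloor$ and $b=\lfloor n(t+\varepsilon)\rfloor$. Since $K$ is nondecreasing, on $\{K_b>K_a\}$ the index $k=K_a+1$ satisfies $a<T_k\leq b$. If $k>M$, then $K_b\geq M+1$, so $T_{M+1}\leq b$. Hence for each fixed $M$,
\[
 \{K_b>K_a\}\subset\bigcup_{k\leq M}\{a<T_k\leq b\}\cup\{T_{M+1}\leq b\}.
\]
It follows that the $\limsup$ is at most
\[
 \sum_{k\leq M}\bigl[\Gamma_k(w(t+\varepsilon))-\Gamma_k(wt)\bigr]
 +\Gamma_{M+1}(w(t+\varepsilon)).
\]
Continuity of the Gamma distribution functions removes the sum as $\varepsilon\downarrow0$. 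The remaining term $\Gamma_{M+1}(wt)$ tends to $0$ as $M\to\infty$.

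This is the paper's proof. The limiting arrival times have a continuous law, so a short window carries little probability. The possibility of a gap near $\pi$ at a deterministic time never has to be confronted. Your cases~1 and~2 are essentially sound, but they become unnecessary.
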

\end{samepage}

\begin{proof}
Fix an integer $M\geq1$ to truncate the number of crossings.
A crossing in this window is either among the first $M$ crossings
or implies $T_{M+1}\leq\lfloor n(t+\varepsilon)\rfloor$.
Thus \eqref{eq:uniform-arrival-kernel} gives
\[
 \begin{aligned}
 &\limsup_{n\to\infty}
 \PP\bigl(K_{\lfloor n(t+\varepsilon)\rfloor}>K_{\lfloor nt\rfloor}\bigr)\\
 &\qquad\leq\Gamma_{M+1}\bigl(w(t+\varepsilon)\bigr)
       +\sum_{k=1}^M\bigl[\Gamma_k\bigl(w(t+\varepsilon)\bigr)
                                    -\Gamma_k(wt)\bigr].
 \end{aligned}
\]
First let $\varepsilon\downarrow0$. Continuity of the $\Gamma_k$
leaves only $\Gamma_{M+1}(wt)$, which tends to zero as $M\to\infty$.
\end{proof}

\subsection{Phase gaps modulo \texorpdfstring{$\pi$}{pi} and the terminal phase}

At a crossing time $T_m$, \eqref{eq:overshoot} gives
$R_{T_m}\leq\delta$. We next show that $R_j$ tends to zero in
probability at times $j=\lfloor nt\rfloor$, for each fixed $t>0$.

\begin{samepage}
\begin{lemma}[Phase gaps modulo $\pi$ at deterministic times]\label{lem:residual-pinning}
Let $E_{-,n}<E_{+,n}$ tend to $E$, with $n\Delta N_n\to w>0$.
Start the coupled phases \eqref{eq:pair} from equal angles.
Then, for every $t>0$,
\[
 R_{\lfloor nt\rfloor}\longrightarrow0
 \quad\hbox{in probability}.
\]
\end{lemma}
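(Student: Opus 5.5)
The plan is to combine the unstopped contraction of Lemma~\ref{lem:unstopped} with the short-window crossing estimate of Lemma~\ref{lem:short-window-counts}. Contraction says that $|\sin D_j|$ is small at most deterministic times once $j$ is large. The remaining task is to exclude $R_j$ lying near $\pi$ rather than near $0$. The heuristic is that the gap enters the region near $\pi$ only shortly before a crossing. So if $R_{\lfloor nt\rfloor}$ is close to $\pi$, a crossing should occur within a short further time window, and that event has small probability.

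\emph{Step 1 (gap near a multiple of $\pi$).} Fix $t>0$ and set $j=\lfloor nt\rfloor$. By Lemma~\ref{lem:unstopped}, $\EE|\sin D_j|^r\leq q_0^{\lfloor j/L\rfloor}+C\delta_n^r\to0$. Hence for every $h\in(0,\pi/2)$ we get $\PP(h<R_j<\pi-h)\to0$. It remains to show that $\PP(R_j\geq\pi-h)$ is small, uniformly as $n\to\infty$ and then $h\to0$.

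\emph{Step 2 (near $\pi$ forces an imminent crossing).} On $\{R_j\geq\pi-h\}$, restart at time $j$ as in the proof of Corollary~\ref{cor:restart}. Shift the second lift by $\pi K_j$, so the gap becomes $R_j\in[\pi-h,\pi)$, and the future potentials are independent of $\FF_j$. Then \eqref{eq:uppermean} gives
\[
 \pi\Delta N_n\EE[T_{K_j+1}-j\mid\FF_j]\leq\omega(h)+2\omega(\delta_n).
\]
Markov's inequality then bounds the conditional probability that no crossing occurs in $(j,j+\varepsilon n]$ by $(\omega(h)+2\omega(\delta_n))/(\pi\Delta N_n\varepsilon n)$. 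Since $n\Delta N_n\to w$, this is at most $C(\omega(h)+2\omega(\delta_n))/(w\varepsilon)$. Therefore
\[
 \PP(R_j\geq\pi-h)\leq\PP\bigl(K_{\lfloor n(t+\varepsilon)\rfloor}>K_{\lfloor nt\rfloor}\bigr)+\frac{C[\omega(h)+2\omega(\delta_n)]}{w\varepsilon},
\]
up to a rounding adjustment of $\varepsilon$. The initial gap is $0$, so Lemma~\ref{lem:short-window-counts} applies with $h_n=0$.

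\emph{Step 3 (order of limits).} Take $\limsup_{n\to\infty}$, then let $h\to0$ at fixed $\varepsilon$, and finally let $\varepsilon\downarrow0$. The first term vanishes by Lemma~\ref{lem:short-window-counts} and the second by continuity of $\omega$. Combined with Step~1, this gives $\PP(R_j>h)\to0$ for each $h$, which is the claim.

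I expect the main obstacle to be Step~2: it trades the drift bound near $\pi$ for a crossing event in a window of spatial length $\varepsilon n$. This requires the Markov bound to be uniform in the entering phase, which is why the uniformity in \eqref{eq:uppermean} matters. It also requires the three limits to be taken in the order given above. A secondary point is the requirement $t>0$: it is needed in Step~1 so that the contraction term $q_0^{\lfloor nt/L\rfloor}$ actually decays.
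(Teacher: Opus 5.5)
Your proposal is correct and follows the paper's proof essentially step for step: you exclude the middle region with the unstopped contraction bound \eqref{eq:unstopped-pair}, and you exclude the region near $\pi$ by restarting at $\lfloor nt\rfloor$, applying \eqref{eq:uppermean} with Markov's inequality over a future window of length about $\varepsilon n$, and invoking Lemma~\ref{lem:short-window-counts}, with limits taken as $n\to\infty$, then $h\downarrow0$, then $\varepsilon\downarrow0$. The only cosmetic difference is that the paper separates a fixed threshold $h_0$ from the auxiliary parameter $h<h_0$, and your reduction to $\lim_{h\downarrow0}\limsup_{n}\PP(R_j\geq\pi-h)=0$, combined with monotonicity in $h$, does the same thing implicitly.
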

\end{samepage}

\begin{proof}
Fix $t>0$, set $j_n=\lfloor nt\rfloor$, and fix a threshold $h_0>0$.
For $0<h<\min(h_0,\pi/4)$, \eqref{eq:unstopped-pair} gives
\begin{equation}\label{eq:residual-middle-small}
 \PP(h\leq R_{j_n}\leq\pi-h)\longrightarrow0.
\end{equation}
To exclude $R_{j_n}$ near $\pi$, fix $\eta>0$ and put
$\ell_n=\lfloor n(t+\eta)\rfloor-j_n$, so that
$\Delta N_n\ell_n\to w\eta$. Conditional on $\FF_{j_n}$,
the time to the next crossing is the first exit time restarted
from $R_{j_n}$. The upper-end mean bound \eqref{eq:uppermean}
and Markov's inequality give
\[
 \PP(R_{j_n}\geq\pi-h,\ K_{j_n+\ell_n}=K_{j_n})
 \leq\frac{\omega(h)+2\omega(\delta_n)}
              {\pi\Delta N_n\ell_n}.
\]
If $R_{j_n}>h_0$, either $R_{j_n}$ lies in the middle region,
a crossing occurs during the future window, or the event in the
last display occurs. Hence
\begin{equation}\label{eq:short-window-arrival-budget}
 \limsup_{n\to\infty}\PP(R_{j_n}>h_0)
 \leq\limsup_{n\to\infty}\PP(K_{j_n+\ell_n}>K_{j_n})
       +\frac{\omega(h)}{\pi w\eta}.
\end{equation}
First let $h\downarrow0$ at fixed $\eta$, then let
$\eta\downarrow0$. Lemma~\ref{lem:short-window-counts} makes the
right side vanish. Since $h_0>0$ was arbitrary, $R_{j_n}$
converges to zero in probability.
\end{proof}

Converting phase counts into eigenvalue counts requires one more
estimate. A small phase gap modulo $\pi$ can still produce a rounding
error if the terminal phase is close to the Dirichlet direction
$\pi/2$. We exclude this possibility using convergence to the
non-atomic forward stationary law. The energy endpoints will
vary with $n$, so we state this convergence for moving energies.

\pagebreak[0]
\begin{samepage}
For angular distance modulo $\pi$, write
\[
 \operatorname{dist}_{\pi}(x,y)
   =\min_{k\in\mathbb Z}|x-y-k\pi|.
\]

\begin{lemma}[Terminal phases at varying energies]
\label{lem:forward-stationary-limit}
Let $t_n$ be deterministic energies tending to $E$, and let
$\theta_n(t_n)$ be the Dirichlet phase defined by
\eqref{eq:dirichlet-phase}.
Then
\[
 \theta_n(t_n)\bmod\pi\ \Longrightarrow\ \nu_E^+.
\]
In particular, for every projective angle $\beta$,
\begin{equation}\label{eq:moving-phase-no-atom}
 \lim_{h\downarrow0}\limsup_{n\to\infty}
 \PP\bigl(\operatorname{dist}_{\pi}(\theta_n(t_n),\beta)\leq h\bigr)=0.
\end{equation}
\end{lemma}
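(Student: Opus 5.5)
We need the law of the projective point $[M_{t_n,n}(1,0)^{\mathsf T}]$ to converge to $\nu_E^+$. The plan is to run the forward Markov chain for the last $k$ steps only, and to use contraction to forget where the chain stood at time $n-k$.

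\emph{Reduction to a fixed energy.} Fix a large integer $k$ and a bounded Lipschitz test function $g$ on $\RP$. Write $x_j=[\mathbf u(\theta_j(t_n))]$ for the projective Dirichlet phase at energy $t_n$; then $x_n=M^{(k)}_{t_n}x_{n-k}$, where $M^{(k)}_{t}$ is the product of the last $k$ transfer matrices at energy $t$. These last $k$ potentials are independent of $x_{n-k}$. Replacing $t_n$ by $E$ in those $k$ factors changes the angle by at most $C_k|t_n-E|$, by \eqref{eq:energyclose} and the uniform Lipschitz bound $B$. So for fixed $k$ this costs $o(1)$ in $\EE g$. We also let $y$ be a random point with law $\nu_E^+$, independent of everything. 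By forward stationarity $M^{(k)}_Ey$ again has law $\nu_E^+$.

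\emph{Contraction forgets the start.} Applying Lemma~\ref{lem:contraction} at the single energy $E$ (so $\delta=0$) and iterating blocks conditionally gives
\[
 \EE\, d\bigl(M^{(k)}_Ex_{n-k},M^{(k)}_Ey\bigr)^r\leq q_0^{\lfloor k/L\rfloor},
\]
uniformly in the pair $(x_{n-k},y)$, since that pair is independent of the last $k$ potentials. Because $g$ is Lipschitz and $d\leq1$, $|g(a)-g(b)|\leq C\,d(a,b)^r$. Hence
\[
 \limsup_n\bigl|\EE g(x_n)-\textstyle\int g\,d\nu_E^+\bigr|\leq Cq_0^{\lfloor k/L\rfloor}.
\]
Letting $k\to\infty$ gives weak convergence, because bounded Lipschitz functions determine it. This requires $n\geq k$, which is automatic for large $n$, and $t_n\in I$, which holds eventually since $t_n\to E$.

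\emph{The no-atom consequence.} For \eqref{eq:moving-phase-no-atom}, take the closed arc $J_h$ of angular radius $h$ about $\beta$. Portmanteau gives $\limsup_n\PP(x_n\in J_h)\leq\nu_E^+(J_h)$. As $h\downarrow0$ this tends to $\nu_E^+(\{\beta\})=0$, since $\nu_E^+$ is non-atomic by Corollary~\ref{prop:furstenberg-used}.

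\emph{Main obstacle.} The expected difficulty is only bookkeeping. The contraction must be applied at the fixed energy $E$ rather than at $t_n$, and the energy perturbation is controlled only over a fixed number $k$ of final steps. Otherwise the error $C_k|t_n-E|$ would compound with $B^k$. Taking $n\to\infty$ before $k\to\infty$ resolves this. An alternative is to apply Lemma~\ref{lem:contraction} directly at $t_n\in I$ and then use weak continuity of $\nu_t^+$ from Lemma~\ref{lem:continuity}, comparing with $\nu_{t_n}^+$.
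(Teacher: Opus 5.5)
Your argument is correct, but it handles the moving energy differently from the paper. The paper takes exactly the route you list as an alternative. It couples the Dirichlet phase over all $n$ steps, at the energy $t_n$ itself, with a phase started from $\nu_{t_n}^+$. It uses the contraction estimate of Lemma~\ref{lem:contraction}, uniformly over $t_n\in I$, to get $\EE d(\theta_n(t_n),\widehat\theta^{(n)}_n)^r\le Cq_0^{\lfloor n/L\rfloor}$. It then invokes the weak continuity $\nu_{t_n}^+\Rightarrow\nu_E^+$ from Lemma~\ref{lem:continuity}.

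You instead freeze the energy at $E$ on a final window of fixed length $k$, paying the finite-step perturbation $C_k|t_n-E|$. You contract only at the single energy $E$ against a $\nu_E^+$-distributed start, and then send $n\to\infty$ before $k\to\infty$. Your version therefore needs neither uniformity of the contraction constants in energy nor the forward half of Lemma~\ref{lem:continuity}. In fact, starting $x_{n-k}$ from $\nu_{t_n}^+$ in your argument would reprove that continuity at $E$. The cost is a double limit.

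The paper's version is a one-line coupling with an explicit geometric rate for the coupling error. It is short because uniform contraction on $I$ and the continuity lemma are already in place; the inverse half of that lemma is needed anyway for $\mathcal H_t$. The Portmanteau step giving \eqref{eq:moving-phase-no-atom} from non-atomicity of $\nu_E^+$ is the same in both.
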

\end{samepage}

\begin{proof}
At energy $t_n$, couple the Dirichlet phase with a phase
$\widehat\theta_j^{(n)}$ whose initial projective law is
$\nu_{t_n}^+$, independently of the future potentials.
Iterating the uniform contraction estimate \eqref{eq:blockcontract}
of Lemma~\ref{lem:contraction} gives
\[
 \widehat\theta_n^{(n)}\bmod\pi\sim\nu_{t_n}^+,\qquad
 \EE d(\theta_n(t_n),\widehat\theta_n^{(n)})^r
       \leq Cq_0^{\lfloor n/L\rfloor}\longrightarrow0.
\]
Lemma~\ref{lem:continuity} gives $\nu_{t_n}^+\Longrightarrow\nu_E^+$.
The coupled phases are asymptotically equal on the projective
line, which proves the first assertion. The closed-set inequality
in the Portmanteau theorem \cite[Theorem~2.1(b)]{Billingsley1971}
gives, for fixed $h>0$,
\[
 \begin{aligned}
 &\limsup_{n\to\infty}
   \PP\bigl(\operatorname{dist}_{\pi}(\theta_n(t_n),\beta)\leq h\bigr)\\
 &\qquad\leq\nu_E^+\bigl(\{x\in\RP:
                    \operatorname{dist}_{\pi}(x,\beta)\leq h\}\bigr)
   \xrightarrow{h\downarrow0}\nu_E^+(\{\beta\})=0,
 \end{aligned}
\]
where the last equality uses non-atomicity. This proves
\eqref{eq:moving-phase-no-atom}.
\end{proof}

\section{Asymptotic independence and the spectral point process}
\label{sec:pointprocess}

Corollary~\ref{cor:counts} gives Poisson limits for individual energy
windows. We now prove asymptotic independence across windows and
convert phase counts to eigenvalue counts to obtain
Theorem~\ref{thm:poisson}.

\subsection{Asymptotic independence across energy intervals}

Fix the number $s\geq1$ of energy intervals. For each $n$, take
their $s+1$ ordered endpoints $E_{0,n}<\cdots<E_{s,n}$.
The phases at these energies all start from zero and use the same
potentials, with
$\FF_j=\sigma(v_0,\ldots,v_{j-1})$.
For $1\leq i\leq s$ and $0\leq j\leq n$, define the corresponding
phase gaps and counts by
\[
 D_i(j)=\theta_j(E_{i-1,n})-\theta_j(E_{i,n}),\qquad
 K_i(j)=\lfloor D_i(j)/\pi\rfloor,\qquad
 R_i(j)=D_i(j)-\pi K_i(j).
\]
For each energy interval $(E_{i-1,n},E_{i,n}]$, write its IDS increment as
\begin{equation}\label{eq:window-gaps}
 \Delta N_{i,n}=N(E_{i,n})-N(E_{i-1,n}).
\end{equation}
For the full energy interval $(E_{0,n},E_{s,n}]$, define
\[
 \begin{aligned}
 D_*(j)&=\theta_j(E_{0,n})-\theta_j(E_{s,n})
                         =\sum_{i=1}^s D_i(j),\\
 K_*(j)&=\left\lfloor\frac{D_*(j)}\pi\right\rfloor.
 \end{aligned}
\]

For each adjacent window, Corollary~\ref{cor:counts} already
gives a Poisson limit determined by its IDS mass. It remains to
prove independence across the windows.
We use the full-interval count $K_*$ to control multiple crossings
across the windows.

\begin{samepage}
\begin{proposition}[Asymptotic independence of crossing counts]\label{prop:joint-crossing-counts}
Let $E\in\operatorname{int}\Sigma$. Suppose $E_{i,n}\to E$ for $0\leq i\leq s$, and
$n\Delta N_{i,n}\to w_i>0$ for $1\leq i\leq s$. Then the crossing
counts in the disjoint energy intervals $(E_{i-1,n},E_{i,n}]$,
$1\leq i\leq s$, are asymptotically independent:
\[
 (K_1(n),\ldots,K_s(n))\Longrightarrow (Z_1,\ldots,Z_s),
\]
where the $Z_i$ are independent Poisson variables of means $w_i$.
\end{proposition}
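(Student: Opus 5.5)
Divide the sites $\{0,\ldots,n\}$ into $m$ spatial blocks with endpoints $b_k=\lfloor kn/m\rfloor$, $0\leq k\leq m$. Decompose each count as $K_i(n)=\sum_{k=1}^m \xi_{i,k}$, where $\xi_{i,k}=K_i(b_k)-K_i(b_{k-1})$ is the number of period crossings of $D_i$ during block $k$. The plan is to compare $(K_1(n),\ldots,K_s(n))$ with a sum of $m$ roughly independent multinomial-type vectors. Each vector has at most one nonzero entry. Its entry in window $i$ equals one with probability close to $w_i/m$. Letting $n\to\infty$ first and then $m\to\infty$ yields independent Poisson limits with means $w_i$. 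We use the joint Laplace transform, or equivalently the joint generating function $\EE\prod_i z_i^{K_i(n)}$.

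\emph{Step 1: pinning at block starts.} At each block boundary $b_{k-1}$ with $k\geq2$, Lemma~\ref{lem:residual-pinning} applies to each window (initial angles equal, $t=(k-1)/m>0$) and to the full interval, whose IDS mass satisfies $n\sum_i\Delta N_{i,n}\to\sum_iw_i$. It gives $R_i(b_{k-1})\to0$ and $R_*(b_{k-1})\to0$ in probability. For $k=1$ all gaps are zero. Hence there is a deterministic $h_n\to0$ such that, with probability tending to one, every residual gap at every block start is at most $h_n$. Restart the phase recursion at $b_{k-1}$ after subtracting completed periods, as in the proof of Corollary~\ref{cor:restart}. Conditional on $\FF_{b_{k-1}}$ and on this good event, each window and the full interval then start from an allowed state for Corollary~\ref{cor:counts}, with $\ell_n=b_k-b_{k-1}$.

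\emph{Step 2: one-block estimates.} Corollary~\ref{cor:counts} and \eqref{eq:uniform-arrival-kernel}, applied with $k=1$ and uniformly in the entering phase, give
\[
 \PP(\xi_{i,k}\geq1\mid\FF_{b_{k-1}})=\Gamma_1(w_i/m)+o(1).
\]
Applied with $k=2$ to the full interval, they give
\[
 \PP\bigl(K_*(b_k)-K_*(b_{k-1})\geq2\mid\FF_{b_{k-1}}\bigr)\leq\Gamma_2(W/m)+o(1)\leq \frac{W^2}{2m^2}+o(1),
\]
where $W=\sum_iw_i$.

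The key structural point is an inequality between the full-interval count and the window counts. Since $D_*=\sum_iD_i$ and each $R_i\in[0,\pi)$, we have
\[
 K_*(j)\ \geq\ \sum_iK_i(j)\ \geq\ K_*(j)-(s-1).
\]
After restarting with all residuals small, the block increments therefore satisfy $\sum_i\xi_{i,k}\leq K_*(b_k)-K_*(b_{k-1})+O(1)$. The $O(1)$ slack is the danger. I would control it by showing that it only matters when some $R_i$ is large at block end. Lemma~\ref{lem:residual-pinning} at $b_k$, applied to every window, makes this negligible in probability.

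Combining these facts, the event that one block contributes two or more crossings, whether in a single window or across different windows, has conditional probability $O(m^{-2})+o(1)$.

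\emph{Step 3: the generating function.} Condition successively at $b_{m-1},\ldots,b_0$. On the good event, the conditional generating function of a block is
\[
 1+\sum_i(z_i-1)\frac{w_i}{m}+O(m^{-2})+o(1),
\]
uniformly in the past. Multiplying over the $m$ blocks gives
\[
 \limsup_n\Bigl|\EE\prod_iz_i^{K_i(n)}-\Bigl(1+\sum_i(z_i-1)\frac{w_i}{m}\Bigr)^m\Bigr|\leq C/m.
\]
The bad events cost $o(1)$, since the $z_i$ lie in $[0,1]$. Letting $m\to\infty$ yields $\prod_ie^{w_i(z_i-1)}$, which is the joint generating function of independent Poisson variables with means $w_i$.

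\emph{Main obstacle.} The delicate step is Step 2. Its one-block estimate must hold simultaneously for all windows, conditionally on the past, and must convert the full-interval two-arrival bound into a bound on crossings in distinct windows. The uniformity in $\theta_0$ of Corollary~\ref{cor:counts} makes the conditional estimate work. The residual-pinning lemma, applied at both block ends, handles the floor-function slack between $K_*$ and $\sum_iK_i$. I would first check carefully that one crossing of $D_i$ and one of $D_{i'}$ within a block force $D_*$ to advance by nearly $2\pi$ when all residuals start near $0$, so that $K_*$ registers two crossings.
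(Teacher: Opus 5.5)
Your proposal is correct and follows essentially the same route as the paper: $m$ spatial blocks, residual pinning at block starts, the uniform arrival kernel \eqref{eq:uniform-arrival-kernel} with $k=1$ for each window and $k=2$ for the full interval, and a backward product of block kernels (your generating function with $z_i=e^{-t_i}$ is the paper's Laplace transform), letting $n\to\infty$ before $m\to\infty$. The ``$O(1)$ slack'' you flag as the main obstacle is not really there. By the exact identity $K_*(j)=\sum_iK_i(j)+\lfloor\sum_iR_i(j)/\pi\rfloor$, any excess of $\sum_i\xi_{i,k}$ over $K_*(b_k)-K_*(b_{k-1})$ comes only from the floor at the block \emph{start}, and that floor vanishes on your block-start good event. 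So $\sum_i\xi_{i,k}\leq K_*(b_k)-K_*(b_{k-1})$ holds exactly, the terminal floor only helps, and no pinning at block ends is needed.
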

\end{samepage}

\begin{proof}
The subdivision argument is related to the phase-counting proofs
in \cite[Section~3.2]{AD14} and \cite[Section~5.3]{KN17}.
Fix a large integer $m$ and divide the $n$ sites into $m$ blocks.
Blocks use disjoint potential variables, but their
entering phases depend on the past. Lemma~\ref{lem:residual-pinning} and the
uniform restart law make the conditional crossing probabilities
asymptotically deterministic.

\smallskip
\emph{Block starts and the full energy interval.}
The floor operation gives the exact identity
\begin{equation}\label{eq:joint-floor-identity}
 K_*(j)=\sum_{i=1}^s K_i(j)+\left\lfloor\frac{\sum_{i=1}^s R_i(j)}\pi\right\rfloor.
\end{equation}
Define the block endpoints by
\[
 j_b=\left\lfloor\frac{bn}{m}\right\rfloor,\qquad 0\leq b\leq m.
\]
With $s$ and $m$ fixed, Lemma~\ref{lem:residual-pinning},
together with $R_i(0)=0$, allows us to choose a common deterministic
sequence $h_n\downarrow0$ such that, for $1\leq b\leq m$, the events
\[
 G_{b,n}=\left\{\max_{1\leq i\leq s}R_i(j_{b-1})\leq h_n\right\}
\]
satisfy $\max_{1\leq b\leq m}\PP(G_{b,n}^{\complement})
\xrightarrow{n\to\infty}0$.
On $G_{b,n}$, the restarted adjacent gaps are at most $h_n$,
and the gap for the full interval is at most $sh_n$.

Let $w_* =\sum_{i=1}^s w_i$ and define the block increments by
\[
 \Delta K_i^{(b)}=K_i(j_b)-K_i(j_{b-1}),\qquad
 \Delta K_*^{(b)}=K_*(j_b)-K_*(j_{b-1}).
\]
On $G_{b,n}$, $\sum_{i=1}^s R_i(j_{b-1})\leq sh_n<\pi$ for
large $n$, so the initial floor in \eqref{eq:joint-floor-identity}
vanishes. Subtracting that identity at the two block endpoints gives
\[
 \sum_{i=1}^s\Delta K_i^{(b)}
 =\Delta K_*^{(b)}
  -\left\lfloor\frac{\sum_{i=1}^s R_i(j_b)}\pi\right\rfloor
 \leq\Delta K_*^{(b)}.
\]
The terminal floor is nonnegative because each $R_i(j_b)\geq0$.

\smallskip
\emph{Crossings within one spatial block.}
After removing the completed periods from $D_i$ at $j_{b-1}$,
$\Delta K_i^{(b)}\geq k$ means exactly that the restarted $T_k$
is at most $j_b-j_{b-1}$. Since
$\Delta N_{i,n}(j_b-j_{b-1})\to w_i/m$,
\eqref{eq:uniform-arrival-kernel}, applied conditionally on
$\FF_{j_{b-1}}$ and also to the two extreme energies,
gives a common deterministic $\varepsilon_n(m)\to0$ such that,
on $G_{b,n}$,
\begin{align*}
 \left|\PP(\Delta K_i^{(b)}\geq1\mid\FF_{j_{b-1}})
                    -(1-e^{-w_i/m})\right|&\leq\varepsilon_n(m),\\
 \PP(\Delta K_*^{(b)}\geq2\mid\FF_{j_{b-1}})
                    &\leq\frac{w_*^2}{2m^2}+\varepsilon_n(m).
\end{align*}
The error is uniform in the restart phases, in gaps at most
$sh_n$, and in the finitely many block and window indices.
The second bound uses
$\Gamma_2(w_*/m)\leq w_*^2/(2m^2)$ from
\eqref{eq:gamma-small-time}.

For fixed $t_1,\ldots,t_s\geq0$, define
$\Lambda=\sum_{i=1}^s w_i(1-e^{-t_i})$.
When $\sum_i\Delta K_i^{(b)}\leq1$, the exponential below equals
$1-\sum_i(1-e^{-t_i})\1_{\{\Delta K_i^{(b)}\geq1\}}$.
Otherwise the discrepancy is bounded by a constant depending only
on $s$, and this event requires $\Delta K_*^{(b)}\geq2$. Thus,
on $G_{b,n}$,
\[
 \begin{aligned}
 &\EE\!\left[e^{-\sum_{i=1}^s t_i\Delta K_i^{(b)}}\mid\FF_{j_{b-1}}\right]\\
 &\quad=1-\sum_{i=1}^s(1-e^{-t_i})
       \PP(\Delta K_i^{(b)}\geq1\mid\FF_{j_{b-1}})
       +O(m^{-2}+\varepsilon_n(m)).
 \end{aligned}
\]
Using $1-e^{-w_i/m}=w_i/m+O(m^{-2})$ yields
\begin{equation}\label{eq:joint-block-kernel}
 \left|\EE\!\left[e^{-\sum_{i=1}^s t_i\Delta K_i^{(b)}}
                    \mid\FF_{j_{b-1}}\right]-(1-\Lambda/m)\right|
 \leq C m^{-2}+C\varepsilon_n(m).
\end{equation}
Here $C$ depends only on the fixed windows and their limiting rates.

\smallskip
\emph{Combining the spatial blocks.}
Take $m>w_*$, so $1-\Lambda/m\in[0,1]$.
Take conditional expectations from the last block backwards, using
\eqref{eq:joint-block-kernel} on each $G_{b,n}$ and the bound $2$
on its complement. All earlier block factors are bounded by one, giving
\begin{equation}\label{eq:joint-backward-error}
 \left|\EE e^{-\sum_{i=1}^s t_iK_i(n)}-(1-\Lambda/m)^m\right|
 \leq C/m+Cm\varepsilon_n(m)
             +2\sum_{b=1}^m\PP(G_{b,n}^{\complement}).
\end{equation}
First let $n\to\infty$ at fixed $m$, and then let $m\to\infty$.
The joint Laplace transform then converges to
$\exp[-\sum_{i=1}^s w_i(1-e^{-t_i})]$.
The multivariate Laplace continuity theorem
\cite[Proposition~B.10]{LastPenrose2018} proves the asserted
independent Poisson limit.
\end{proof}

\subsection{From crossings to spectral counts}
Sturm oscillation leaves a possible error of one per energy
window. This can occur only when the terminal phase
$\theta_n(E_{i,n})$ lies within $R_i(n)$ of the Dirichlet direction
$\pi/2$ modulo $\pi$. Lemmas~\ref{lem:residual-pinning}
and~\ref{lem:forward-stationary-limit} show that the probability
of this event tends to zero.

\begin{samepage}
\begin{lemma}[From phase counts to eigenvalue counts]
\label{lem:spectral-count-conversion}
Let the endpoint arrays satisfy the hypotheses of
Proposition~\ref{prop:joint-crossing-counts}, and set
\[
 C_i(n)=\#\{j:\lambda_{j,n}\in(E_{i-1,n},E_{i,n}]\}.
\]
Then
\[
 \PP\bigl(C_i(n)=K_i(n)\text{ for all }1\leq i\leq s\bigr)
 \longrightarrow1.
\]
\end{lemma}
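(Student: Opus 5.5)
The plan is to compare the Sturm formula \eqref{eq:sturm-interval} with the crossing count window by window, and to isolate the exceptional event where the two differ. Fix $i$ and write $a=\theta_n(E_{i-1,n})$ and $b=\theta_n(E_{i,n})$, so that $a-b=D_i(n)=\pi K_i(n)+R_i(n)$ with $0\leq R_i(n)<\pi$. By \eqref{eq:sturm-interval},
\[
 C_i(n)=\left\lceil\frac{b+\pi K_i(n)+R_i(n)-\pi/2}{\pi}\right\rceil
       -\left\lceil\frac{b-\pi/2}{\pi}\right\rceil
       =K_i(n)+\left\lceil\frac{b+R_i(n)-\pi/2}{\pi}\right\rceil
       -\left\lceil\frac{b-\pi/2}{\pi}\right\rceil .
\]
The last difference of ceilings lies in $\{0,1\}$ because $R_i(n)<\pi$. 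It equals $1$ only if the half-open interval $(b,b+R_i(n)]$ contains a point of $\pi/2+\pi\mathbb Z$. When this happens, $\operatorname{dist}_\pi(b,\pi/2)\leq R_i(n)$. Hence
\[
 \{C_i(n)\neq K_i(n)\}\subset\bigl\{\operatorname{dist}_\pi(\theta_n(E_{i,n}),\pi/2)\leq R_i(n)\bigr\}.
\]

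Next, fix $h>0$ and split this event according to whether $R_i(n)>h$ or $R_i(n)\leq h$. This gives
\[
 \PP(C_i(n)\neq K_i(n))\leq\PP(R_i(n)>h)
  +\PP\bigl(\operatorname{dist}_\pi(\theta_n(E_{i,n}),\pi/2)\leq h\bigr).
\]
The first term tends to zero by Lemma~\ref{lem:residual-pinning}, applied with $t=1$ to the pair $E_{i-1,n}<E_{i,n}$. This pair tends to $E$, satisfies $n\Delta N_{i,n}\to w_i>0$, and both Dirichlet phases start at zero, so $R_{\lfloor n\cdot1\rfloor}=R_i(n)$. For the second term, Lemma~\ref{lem:forward-stationary-limit} with $t_n=E_{i,n}\to E$ and $\beta=\pi/2$ gives
\[
 \lim_{h\downarrow0}\limsup_{n}\PP\bigl(\operatorname{dist}_\pi(\theta_n(E_{i,n}),\pi/2)\leq h\bigr)=0.
\]
So we take $\limsup_n$ first and then let $h\downarrow0$. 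A union bound over the finitely many windows $1\leq i\leq s$ then finishes the argument.

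The main point to check carefully is that the hypotheses of Lemma~\ref{lem:residual-pinning} are met for each adjacent pair, and that the ceiling bookkeeping is correct at the boundary cases. Lemma~\ref{lem:residual-pinning} requires $E_{\pm,n}\to E$ and a positive rescaled IDS mass, and both hold by assumption. For the boundary cases, the endpoint $b+R_i(n)$ is included in the interval while $b$ itself is excluded; this asymmetry only shrinks the bad event, so the containment displayed above remains valid.
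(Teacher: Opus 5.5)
Your proposal is correct and is essentially the paper's proof. The Sturm ceiling identity reduces $C_i(n)-K_i(n)$ to a ceiling difference in $\{0,1\}$. The bad event lies in $\{\operatorname{dist}_\pi(\theta_n(E_{i,n}),\pi/2)\leq R_i(n)\}$, and splitting at $h$, using Lemma~\ref{lem:residual-pinning} and Lemma~\ref{lem:forward-stationary-limit}, then taking a union bound over $i$ finishes the argument.

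One small slip in the boundary bookkeeping: the ceiling difference equals $1$ exactly when $[b,b+R_i(n))$ meets $\pi/2+\pi\mathbb Z$, not $(b,b+R_i(n)]$. For example, $b=\pi/2$ with $0<R_i(n)<\pi$ gives $1$. In every such case the distance is still at most $R_i(n)$, so your containment and the rest of the argument stand.
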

\end{samepage}

\begin{proof}
Since $D_i(n)=\pi K_i(n)+R_i(n)$, the Sturm ceiling identity gives
\begin{equation}\label{eq:exact-rounding-error}
 \begin{aligned}
 C_i(n)-K_i(n)
 &=\left\lceil\frac{\theta_n(E_{i,n})-\pi/2+R_i(n)}{\pi}\right\rceil
   -\left\lceil\frac{\theta_n(E_{i,n})-\pi/2}{\pi}\right\rceil\\
 &\in\{0,1\}.
 \end{aligned}
\end{equation}
If $C_i(n)-K_i(n)=1$, then
$\operatorname{dist}_{\pi}(\theta_n(E_{i,n}),\pi/2)\leq R_i(n)$.
Thus, for every $h>0$,
\[
 \begin{aligned}
 \PP(\exists i:C_i(n)\ne K_i(n))
 &\leq\sum_{i=1}^s\PP(R_i(n)>h)\\
 &\quad+\sum_{i=1}^s
 \PP\bigl(\operatorname{dist}_{\pi}(\theta_n(E_{i,n}),\pi/2)\leq h\bigr).
 \end{aligned}
\]
For fixed $h$, the first sum tends to zero by
Lemma~\ref{lem:residual-pinning}. For the second sum,
Lemma~\ref{lem:forward-stationary-limit} gives
\[
 \lim_{h\downarrow0}\limsup_{n\to\infty}
 \sum_{i=1}^s\PP\bigl(
   \operatorname{dist}_{\pi}(\theta_n(E_{i,n}),\pi/2)\leq h\bigr)=0.
 \qedhere
\]
\end{proof}

\subsection{Completion of the point-process limit}
The joint spectral count limits are now available for energy
windows with prescribed limiting IDS masses.
We use the standard Laplace-functional criterion for vague
convergence \cite[Proposition~3.19]{Resnick1987}. For the Poisson
process $\Pi$, the Laplace functional is
\begin{equation}\label{eq:poisson-laplace}
 \EE e^{-\int f\,d\Pi}
   =\exp\left[-\int_{\mathbb R}(1-e^{-f(x)})\,dx\right],
 \qquad 0\leq f\in C_c(\mathbb R);
\end{equation}
see \cite[Theorem~3.9]{LastPenrose2018}.
By a standard step-function approximation, it suffices to prove
joint Poisson limits for counts in finite partitions of bounded
intervals.

\begin{proof}[Proof of Theorem~\ref{thm:poisson}]
Fix a partition
\[
 -R=x_0<x_1<\cdots<x_s=R.
\]
For IDS unfolding, choose an open interval $J$ about $E$ with
$J\subset\operatorname{int}\Sigma$. Continuity and strict
increase of $N$ on $J$ imply that, for large $n$, the equations
\[
 N(E_{i,n})=N(E)+\frac{x_i}{n},\qquad 0\leq i\leq s,
\]
determine unique endpoints $E_{i,n}\in J$.
These endpoints tend to $E$, and their adjacent IDS masses satisfy
$n\Delta N_{i,n}=x_i-x_{i-1}$. By monotonicity,
\[
 \xi_{n,E}((x_{i-1},x_i])
 =\#\{j:\lambda_{j,n}\in(E_{i-1,n},E_{i,n}]\}.
\]
Energies outside $J$ do not contribute to $[-R,R]$ once $n$ is
large, since $N(E)$ is an interior point of $N(J)$.
Proposition~\ref{prop:joint-crossing-counts} and
Lemma~\ref{lem:spectral-count-conversion} therefore give the
required independent Poisson limits, proving the first conclusion.

For linear scaling, assume in addition that $N$ is differentiable
at $E$, with finite positive derivative $\rho(E):=N'(E)$. We then set
\[
 E_{i,n}=E+\frac{x_i}{n\rho(E)}.
\]
Differentiability at $E$ gives
$n[N(E_{i,n})-N(E)]\to x_i$, so the adjacent rates again tend
to $x_i-x_{i-1}$. The same two results give the partition
count limits for the linearly rescaled process, completing the proof.
\end{proof}

\begin{samepage}
\appendix
\section{Formalization and AI disclosure}\label{sec:formalization}
OpenAI's GPT-6 assisted with mathematical discussions, Lean
formalization and revisions to the exposition. The author directed
the work and takes responsibility for the mathematical claims and
the manuscript.

A Lean~4 formalization using mathlib is available in the
\href{https://github.com/hanyi162013-Yihan/Poisson_Statistic_1dAndersonBernoulli}
{accompanying repository}.
It formalizes the Poisson conclusions through the phase-crossing
argument developed above. The stationary-law input in
Corollary~\ref{prop:furstenberg-used} and the positive-growth
statement \eqref{eq:stationary-log-positive} are also proved in
the formalization for the transfer law of this model.
The paper cites Furstenberg's theorem for this input and Le Page's
theorem for projective contraction. The remaining crossing and
counting estimates are proved in
Sections~\ref{sec:exit}--\ref{sec:pointprocess}.

\end{samepage}

\clearpage
\printbibliography
\end{document}